\documentclass[lettersize,journal]{IEEEtran}
\usepackage{amsmath,amsfonts,amssymb,amsthm}
\usepackage{algorithmic}
\usepackage{algorithm}
\usepackage{array}
\usepackage[caption=false,font=normalsize,labelfont=sf,textfont=sf]{subfig}
\usepackage{textcomp}
\usepackage{stfloats}
\usepackage{url}
\usepackage{verbatim}
\usepackage{graphicx}
\usepackage{cite}
\usepackage{booktabs}
\usepackage{multirow}
\usepackage{xcolor}
\usepackage{enumitem}
\usepackage[hidelinks]{hyperref}
\newtheorem{theorem}{Theorem}
\newtheorem{proposition}{Proposition}
\newtheorem{definition}{Definition}

\begin{document}

\title{Non-Degenerate Risk Certification for Automated Security Decisions: A Decision-Contract Theory with ATT\&CK-Aligned Triage as a Worked Instance}

\author{Zhenpeng~Li%
\IEEEcompsocitemizethanks{%
\IEEEcompsocthanksitem Zhenpeng Li is with Guangzhou Health Science College,
No.~248 Guangyuan Middle Road, Guangzhou, Guangdong 510405, China
(e-mail: 2025301001@gzws.edu.cn). Corresponding author: Zhenpeng Li.}}

\markboth{Journal of \LaTeX\ Class Files,~Vol.~14, No.~8, August~2021}%
{Shell \MakeLowercase{\textit{et al.}}: A Sample Article Using IEEEtran.cls for IEEE Journals}

\IEEEpubid{0000--0000/00\$00.00~\copyright~2021 IEEE}

\maketitle

\begin{abstract}
An unconditional risk bound on automated decisions can be satisfied without automating anything, since a selector that never acts drives the bound to zero.
We show this is structural: any risk certificate is defined over a decision contract, the inputs a system acts on plus the semantic relation under which an output counts correct, and weakening either hides base-classifier error.
We develop a decision-contract theory: an error-conservation law showing error is only reassigned among harmful automation, human deferral, and semantic masking; a label-free singleton capacity certifying structural incapacity, with a risk-feasible refinement separating recoverable threshold misalignment from risk-constrained incapacity; and a non-degenerate actionability certificate excluding all-abstain solutions by construction.
We instantiate this on ATT\&CK-aligned alert triage for LLM-based intrusion detection, the setting that exposed the vacuity failure.
Across 3 IDS datasets, 6 LLMs, and 4 error-rate thresholds, empirical false-attribution risk stays at or below target in 90.3\% of configurations, with 83.4\% mean correct automation.
The capacity diagnostic explains every low-utility configuration; its refinement separates genuine misalignment from risk-constrained incapacity, confirmed by an exhibited alternative threshold; a training-stability re-run finds no confirmed structural-incapacity instance; and real fine-grained attack-subtype labels confirm the coarsening-transfer identity under a genuine many-to-one map, with small but non-zero masking mass.
\end{abstract}

\begin{IEEEkeywords}
Intrusion detection, MITRE ATT\&CK, Conformal prediction, Large language models, Risk control, Alert triage
\end{IEEEkeywords}

\section{Introduction}
\label{sec:intro}

Modern Security Operations Centers (SOCs) process thousands of intrusion detection system (IDS) alerts daily.
A critical step in incident response is attributing each alert to a specific attack technique within the MITRE ATT\&CK framework, which determines the appropriate containment and remediation actions.
Manual attribution is accurate but does not scale: the median SOC analyst handles 20--30 alerts per hour~\cite{alahmadi2022sok}, while enterprise IDS deployments generate orders of magnitude more.

Large language models (LLMs) have recently demonstrated strong performance on network traffic classification tasks~\cite{ferrag2024revolutionizing}, raising the prospect of automated ATT\&CK-aligned alert triage.
However, an LLM classifier can still return a confident but wrong technique label.
In the SOC context, a misattributed alert is worse than an unattributed one: it can trigger incorrect containment procedures, waste analyst time on the wrong investigation track, or mask a genuine threat behind a benign-looking attribution.

Existing approaches to LLM-based IDS offer no formal control over the misattribution rate.
Practitioners either accept all model outputs at face value (risking misattribution) or apply ad hoc confidence thresholds (which require manual tuning and lack guarantees).
Conformal risk control (CRC)~\cite{angelopoulos2022conformal} offers a principled alternative: wrap the classifier with a calibration-based abstention layer so that alerts whose nonconformity scores exceed a data-driven threshold are deferred, and the probability of a wrong automated attribution among the rest is controlled at a user-specified level~$\alpha$.

This risk statement, however, is weaker than it looks.
It is an unconditional bound: $R \leq \alpha$ holds trivially if the system abstains on everything.
We initially observed exactly this failure mode in one training run of an ML+CRC baseline, where a LightGBM classifier satisfied its FAR target by never automating a single alert; a subsequent stability check (Section~\ref{sec:exp-degenerate}) shows this specific instance does not reproduce across retrainings, but the underlying vacuity is a property of the risk bound itself, not of that one run, and is what motivates the theory developed next.
This is not a bug in one classifier's calibration; it is a structural consequence of what an unconditional risk bound can and cannot certify, one that applies to conformal risk control wherever it is used to gate automation, not only in this triage setting.
We show that risk can be driven down through exactly two mechanisms that have nothing to do with predictive competence---abstaining more (shrinking the set of inputs the system is responsible for) and reporting at a coarser semantic resolution (redefining some errors as correct)---and that these two mechanisms admit a common formal treatment.

Our contributions are:

\begin{enumerate}[leftmargin=*]
\item \textbf{A decision-contract theory of risk-certificate degeneracy.}
We formalize a decision contract as a selector-plus-correctness-relation pair, prove an error-conservation law under which a base classifier's error is only ever reassigned among harmful automation, human deferral, and semantic masking (Theorem~\ref{thm:error_conservation}), and derive an exact risk-transfer identity for semantic coarsening together with an impossibility result for its reverse direction (Theorems~\ref{thm:coarsening_transfer}--\ref{thm:reverse_impossibility}).

\item \textbf{Singleton capacity: a label-free deployability limit, and its risk-feasible refinement.}
For the singleton-set decision rule used throughout the paper, we derive an exact geometric characterization of automation (Proposition~\ref{prop:singleton_interval}) and a resulting capacity $\kappa(f)$ that upper-bounds the action rate attainable by \emph{any} global threshold (Theorem~\ref{thm:singleton_capacity}), with a finite-sample, unlabeled certificate (Proposition~\ref{prop:capacity_certificate}) and a formal counterexample ruling out score-variance-based diagnostics (Proposition~\ref{prop:variance_counterexample}).
$\kappa(f)$ is necessary but not sufficient to diagnose a shortfall as recoverable: we introduce the risk-feasible capacity $\kappa_\alpha(f)$ (Definition~\ref{def:risk_feasible_capacity}), which requires labels but correctly separates threshold misalignment from a risk-constrained limit that $\kappa(f)$ alone cannot see.

\item \textbf{A non-degenerate actionability certificate.}
We define $(\alpha,\rho)$-actionability, which jointly bounds harmful risk and floors the action rate, and show it converts an unconditional risk bound into both a correct-automation lower bound and a conditional-error upper bound (Theorem~\ref{thm:actionability_bridge})---a guarantee the risk bound alone cannot provide (Proposition~\ref{prop:risk_factorization}).
Unlike the risk side, the action-rate floor is not certified by exchangeability alone; we close this with a finite-sample Hoeffding lower bound on the deployed threshold's action rate (Proposition~\ref{prop:action_rate_certificate}).
This makes the certificate two-sided but not symmetric: the risk side keeps the standard marginal conformal guarantee of Theorem~\ref{thm:crc}, and the action-rate side receives a genuine $(1-\delta)$ finite-sample lower confidence bound for the already-realized $\hat\tau$---a different probabilistic statement, not one certified bound paired with an unqualified point estimate.

\item \textbf{Instantiation on ATT\&CK-aligned triage.}
We formalize ATT\&CK-aligned alert triage as a selective prediction problem and evaluate a CRC wrapper across 6 LLMs (7B--32B), 3 IDS benchmark datasets, and 4 error-rate thresholds.
At the representative deployment floor $\rho=0.5$ used throughout, the singleton-capacity diagnostic rules out structural incapacity for every LLM configuration (all 18, Table~\ref{tab:kappa}), and the risk-feasible refinement $\kappa_\alpha(f)$ (Definition~\ref{def:risk_feasible_capacity}) further separates the three lowest-utility cases into two that are risk-constrained incapacity and one genuine, recoverable threshold misalignment---a distinction $\kappa(f)$ alone gets wrong for two of the three; a 20-seed training-stability re-run of an ML+CRC baseline finds no confirmed instance of true structural incapacity, which we report as an open empirical question rather than a confirmed case (Section~\ref{sec:exp-degenerate}).
\end{enumerate}

\section{Related Work}
\label{sec:related}

\subsection{ATT\&CK Technique Mapping}

MITRE ATT\&CK provides a taxonomy of adversarial tactics and techniques observed in real-world intrusions.
Several systems map unstructured cyber threat intelligence (CTI) text to ATT\&CK techniques: rcATT~\cite{legoy2020automated} uses TF-IDF with multi-label classification, TRAM fine-tunes BERT-based models on CTI reports.
These systems operate on \emph{textual} input (threat reports, CVE descriptions) and solve a text classification problem.

Our work addresses a complementary task: mapping \emph{structured IDS alerts} (network flow features) to ATT\&CK techniques.
The input modality, inference pipeline, and label granularity differ fundamentally from CTI text mapping, and the two lines of work are not directly comparable.
What our approach adds is not a better mapping method, but a \emph{risk-controlled decision layer} that quantifies and bounds attribution uncertainty.

\subsection{Conformal Prediction for Safety-Critical Systems}

Conformal prediction~\cite{vovk2005algorithmic} constructs prediction sets with finite-sample coverage guarantees under exchangeability.
Angelopoulos et al.~\cite{angelopoulos2022conformal} generalized this to conformal risk control (CRC), enabling control of arbitrary monotone loss functions---including the false discovery rate and selective prediction risk---and subsequent work extends the same calibration principle to general risk-controlling algorithms via learn-then-test procedures~\cite{angelopoulos2022learn}.

Applications in safety-critical domains include medical diagnosis~\cite{lu2022fair}, autonomous driving~\cite{lindemann2023safe}, and natural language generation~\cite{quach2023conformal}.
In cybersecurity, conformal prediction has been applied to malware detection~\cite{jordaney2017transcend} and network anomaly detection~\cite{smith2015anomaly}, mostly for coverage-style guarantees.

\subsection{Selective Classification and the Object of Certification}
\label{sec:related_selective}

Selective classification~\cite{geifman2017selective,elyaniv2010foundations} equips a classifier with a reject option and characterizes the resulting risk--coverage trade-off; SelectiveNet~\cite{geifman2019selectivenet} integrates the reject decision into training to optimize this trade-off directly, and one-sided formulations~\cite{gangrade2021selective} give alternative selector constructions with their own coverage guarantees.
Decision-theoretic treatments of selective labeling~\cite{wei2021selective} independently arrive at a threshold policy as the optimal selector under an explicit cost model---a policy-optimization question our decision-contract framing does not address, since we take the selector as given by the conformal calibration procedure and ask instead what a certificate over that selector actually establishes.
This literature already establishes that a risk bound alone is insufficient without a coverage (action-rate) guarantee alongside it---the same tension that motivates Proposition~\ref{prop:risk_factorization} and the actionability certificate of Section~\ref{sec:theory_contract}.
Recent work on LLM outputs specifically shows that simple confidence statistics are unreliable selection signals~\cite{phillips2026entropy}, echoing Proposition~\ref{prop:variance_counterexample}'s formal argument that true-label score statistics do not determine singleton capacity, and that risk control can fail to certify what practitioners actually need from structured LLM generations~\cite{kotte2026crc}, an impossibility-flavored result adjacent to but distinct from Theorem~\ref{thm:reverse_impossibility}: theirs concerns what CRC can certify about structured outputs in general, ours concerns what a coarsening of a fixed selector's outputs can certify about the selector's finer-grained behavior.
Selective risk control for LLM action policies is an active target for conformal methods more broadly~\cite{khosravi2026csa}.

Our contribution is not that risk and coverage both matter, which selective classification already shows; it is a generalization of \emph{what is being certified}.
The selective-classification line of work cited above studies the selector: when to abstain, at a fixed correctness relation.
We generalize the certified object to a decision contract $(g,\mathcal{E})$, in which both the selector \emph{and} the correctness relation can vary, and show that this second axis---semantic coarsening---hides base-classifier error by exactly the same accounting as abstention (Theorem~\ref{thm:error_conservation}).
This is why a coverage guarantee at one label resolution does not, by itself, answer whether a report at a coarser resolution is trustworthy: Theorem~\ref{thm:coarsening_transfer} and its reverse-impossibility counterpart (Theorem~\ref{thm:reverse_impossibility}) give an exact accounting and a limit for that question specifically, complementary to work on varying prediction specificity in a label hierarchy rather than a substitute for it.
Class-conditional conformal prediction~\cite{ding2023class} addresses a related but distinct concern---per-class coverage can be masked by a favorable marginal average across many classes---and our within-fiber confusion mass $M_\phi(g)$ (Theorem~\ref{thm:coarsening_transfer}) is the analogous accounting object for masking induced by coarsening the label space itself rather than by aggregating coverage across existing classes.
Conformal risk control~\cite{angelopoulos2022conformal} gives our risk side its finite-sample guarantee; selective classification's risk--coverage lens gives our action-rate side its natural counterpart (Proposition~\ref{prop:action_rate_certificate}); the decision-contract formalism is what places both, plus semantic coarsening, under a single certified object.

\subsection{LLMs for Network Intrusion Detection}

Recent work has explored LLMs as network intrusion detectors, serializing flow features into natural-language prompts for classification~\cite{ferrag2024revolutionizing}.
These approaches achieve competitive accuracy with traditional ML models (XGBoost, LightGBM) on standard benchmarks, but inherit the calibration challenges of LLMs: softmax probabilities are often poorly calibrated, and confident errors are common.

Our framework is model-agnostic: it wraps any LLM-based (or ML-based) classifier and provides post-hoc FAR control without modifying the base model.
This composability is a practical advantage, since the abstention layer can be deployed on top of existing LLM-IDS pipelines.

\section{Problem Formulation}
\label{sec:formulation}

\subsection{Task Definition}

Let $\mathcal{X}$ denote the space of network flow feature vectors and $\mathcal{T} = \{t_1, \ldots, t_K\}$ denote a set of ATT\&CK techniques.
An IDS alert $x \in \mathcal{X}$ has a benchmark-derived, ATT\&CK-aligned operational technique label $y(x) \in \mathcal{T}$, obtained by mapping the dataset's own attack-category annotation through $\phi$ (Section~\ref{sec:mapping}); this is not an independently sourced ATT\&CK ground-truth annotation, and we use $y(x)$ throughout in this operational sense.
A classifier $f: \mathcal{X} \to \Delta^{K}$ produces a probability vector over techniques, where $f_k(x)$ estimates $P(y = t_k \mid x)$.

We consider the \emph{selective prediction} setting: the system either outputs a single technique attribution $\hat{t}(x) \in \mathcal{T}$, or abstains ($\hat{t}(x) = \bot$), deferring the alert to a human analyst.

\subsection{ATT\&CK Label Mapping}
\label{sec:mapping}

Standard IDS benchmark datasets use attack-category labels (e.g., DoS, Probe, Exploitation) rather than ATT\&CK identifiers.
We define a deterministic bijective mapping $\phi: \mathcal{C}_{\text{attack}} \to \mathcal{T}$ from attack categories to ATT\&CK techniques, shown in Table~\ref{tab:mapping}.
Since $\phi$ is a bijection, the classification problem over $\mathcal{C}_{\text{attack}}$ and the attribution problem over $\mathcal{T}$ are equivalent: $P(y = t_k \mid x) = P(c = \phi^{-1}(t_k) \mid x)$.

\begin{table*}[t]
\centering
\caption{Deterministic mapping from IDS attack categories to ATT\&CK techniques. Each attack category maps to exactly one technique (bijection), preserving the classification problem structure.}
\label{tab:mapping}
\begin{tabular}{llll}
\toprule
IDS Label & ATT\&CK Tactic & Technique ID & Mapping Basis \\
\midrule
DoS & Impact & T1498 & Volumetric traffic patterns \\
CredentialAccess & Credential Access & T1110 & Authentication failure rate \\
Exploitation & Initial Access & T1190 & Exploit payload signatures \\
Probe & Discovery & T1046 & Port scanning behavior \\
\bottomrule
\end{tabular}
\end{table*}

This mapping is intentionally coarse-grained (4 techniques from 4 tactics), reflecting the label granularity available in existing IDS benchmarks.
We discuss the implications of this granularity in Section~\ref{sec:limitations}.

\subsection{Nonconformity Score Design}

We define the nonconformity score for sample $x$ and candidate technique $t_k$ as:
\begin{equation}
\label{eq:nc_score}
s(x, t_k) = 1 - \frac{f_k(x)}{\sum_{j: t_j \neq \text{Normal}} f_j(x)},
\end{equation}
where the denominator normalizes over attack classes only (excluding the Normal/benign class).
A low $s(x, t_k)$ indicates high confidence that $x$ corresponds to technique $t_k$.

The normalization in Eq.~\eqref{eq:nc_score} serves two purposes: (i) it removes the influence of the Normal-class probability, which is irrelevant for technique attribution among known-attack alerts; and (ii) it improves cross-model comparability, since raw softmax magnitudes vary across LLM architectures.

\subsection{Risk Guarantee}

\begin{definition}[False Automated Attribution Risk]
For a selective classifier that outputs $\hat{t}(x) \in \mathcal{T} \cup \{\bot\}$, we define:
\begin{equation}
\text{FAR} = P\bigl(\hat{t}(x) \neq \bot,\; \hat{t}(x) \neq y(x)\bigr).
\end{equation}
FAR is therefore an unconditional risk over all attack alerts: it counts only alerts that are automatically attributed to the wrong technique.
For diagnostics, we also report the conditional automated-attribution error, $P(\hat{t}(x) \neq y(x) \mid \hat{t}(x) \neq \bot)$, but this conditional quantity is not the formal target controlled by the conformal threshold.
\end{definition}

\begin{theorem}[Conformal bound for wrong automated attribution]
\label{thm:crc}
Let $\mathcal{D}_{\text{cal}} = \{(x_i, y_i)\}_{i=1}^{n}$ be an exchangeable calibration set.
Define the threshold:
\begin{equation}
\hat{q} = \inf\left\{q : \frac{1}{n+1}\sum_{i=1}^{n} \mathbf{1}[s(x_i, y_i) \leq q] \geq 1 - \alpha\right\},
\end{equation}
and the conformal prediction set:
\begin{equation}
\Gamma_{\hat{q}}(x) = \{t_k \in \mathcal{T}: s(x,t_k) \leq \hat{q}\}.
\end{equation}
The selective decision rule is:
\begin{equation}
\hat{t}(x) = \begin{cases}
\text{the unique element of } \Gamma_{\hat{q}}(x) & \text{if } |\Gamma_{\hat{q}}(x)| = 1, \\
\bot & \text{otherwise}.
\end{cases}
\end{equation}
Then the probability of a wrong automated attribution is bounded by the conformal miscoverage risk:
\begin{equation}
P\bigl(\hat{t}(x) \neq \bot,\; \hat{t}(x) \neq y(x)\bigr)
\leq
P\bigl(y(x) \notin \Gamma_{\hat{q}}(x)\bigr)
\leq \alpha,
\end{equation}
up to the standard finite-sample quantile convention.
\end{theorem}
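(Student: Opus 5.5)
The argument splits into two inequalities, and I would prove them separately: a deterministic event inclusion, then the standard split-conformal coverage argument.

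\textbf{Event inclusion.} Fix any threshold $q$ and test point $x$ with true label $y=y(x)$. Suppose $\hat{t}(x)\neq\bot$ and $\hat{t}(x)\neq y$. By the decision rule, $\Gamma_{q}(x)$ is then a singleton whose only element is $\hat{t}(x)$. Since $\hat{t}(x)\neq y$, it follows that $y\notin\Gamma_q(x)$. Hence
\[
\{\hat{t}(x)\neq\bot,\ \hat{t}(x)\neq y(x)\}\subseteq\{y(x)\notin\Gamma_{q}(x)\}.
\]
This holds pointwise, so it holds in particular at $q=\hat q$. Monotonicity of probability, taken jointly over the calibration set and the test point, then gives the first inequality. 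In words, the only way to automate wrongly is to exclude the true label from the set; errors that occur inside non-singleton sets are absorbed by abstention.

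\textbf{Coverage.} The event $y\notin\Gamma_{\hat q}(x)$ is exactly $\{s_{n+1}>\hat q\}$, where $s_i=s(x_i,y_i)$ and $s_{n+1}=s(x,y(x))$. I assume the test pair is exchangeable with the $n$ calibration pairs, so the scores $s_1,\dots,s_{n+1}$ are exchangeable. The steps are:
\begin{itemize}
\item The definition of $\hat q$ makes it the $\lceil (n+1)(1-\alpha)\rceil$-th smallest of $s_1,\dots,s_n$, with $\hat q=+\infty$ if that index exceeds $n$.
\item The event $s_{n+1}\le\hat q$ holds whenever the rank of $s_{n+1}$ among all $n+1$ scores is at most $\lceil (n+1)(1-\alpha)\rceil$.
\item By exchangeability, with ties broken uniformly at random, this rank is uniform on $\{1,\dots,n+1\}$. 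Ties only help, since they can only enlarge the event $s_{n+1}\le\hat q$.
\item Therefore $P(s_{n+1}\le\hat q)\ge\lceil (n+1)(1-\alpha)\rceil/(n+1)\ge 1-\alpha$.
\end{itemize}
Taking complements gives $P(y\notin\Gamma_{\hat q}(x))\le\alpha$.

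\textbf{Main obstacle.} The only delicate step is the quantile convention. As stated, the set $\{q : \tfrac{1}{n+1}\sum_i \mathbf 1[s_i\le q]\ge 1-\alpha\}$ may be empty when $\alpha<1/(n+1)$. In that case I would adopt $\inf\emptyset=+\infty$, which yields full coverage. I would also check that the infimum is attained, which follows from right-continuity of the empirical CDF, so that $\hat q$ really equals an order statistic. This is precisely the caveat flagged as ``up to the standard finite-sample quantile convention.'' The rest is standard, and the guarantee is marginal over both calibration and test draws.
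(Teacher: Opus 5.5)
Your proof is correct and takes the same route as the paper. The paper calls the second inequality a standard consequence of split conformal prediction, which is your rank argument with $\hat q$ the $\lceil (1-\alpha)(n+1)\rceil$-th calibration order statistic ($+\infty$ when that index exceeds $n$); the first inequality rests on your observation that on a singleton set a wrong output means the true label is excluded, which Theorem~\ref{thm:surrogate_slack} sharpens into the identity $C(\tau)=R(\tau)+S(\tau)$ with $S(\tau)\geq 0$.
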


The guarantee is distribution-free in the usual conformal sense: it requires no parametric assumptions on $P(X,Y)$, only exchangeability between calibration and test samples.
The bound applies to the unconditional probability of a wrong automated attribution.
It does not imply that every realized split has empirical FAR below $\alpha$, nor does it directly bound the conditional error rate among non-abstained predictions.
We therefore treat ``empirical FAR $\leq \alpha$ across all random seeds'' as a stringent operational check, and report conditional automated-attribution error only as a diagnostic quantity.

\subsection{ATT\&CK Hierarchy Guarantee}

ATT\&CK organizes techniques under tactics (e.g., T1498 $\subset$ Impact).
Since our mapping $\phi$ assigns each technique to a unique tactic, correct technique attribution implies correct tactic attribution.
Formally, if $\psi: \mathcal{T} \to \mathcal{A}$ maps techniques to tactics (a surjection in general, a bijection in our 4-class setting), then:
\begin{equation}
\hat{t}(x) = y(x) \implies \psi(\hat{t}(x)) = \psi(y(x)).
\end{equation}
Therefore, the FAR bound at the technique level automatically propagates to the tactic level.

Theorem~\ref{thm:crc} and this hierarchy propagation are standard consequences of split conformal prediction, restated here for the ATT\&CK attribution setting.
They establish that FAR can be controlled; they do not establish that a controlled FAR is operationally meaningful.
Section~\ref{sec:theory_contract} shows that an unconditional bound of this form can be satisfied without any automation at all, and develops the additional conditions under which a low FAR corresponds to genuine, non-degenerate triage.

\section{Decision-Contract Theory for Non-Degenerate Risk Certification}
\label{sec:theory_contract}

The central premise of this section is that a risk bound such as Theorem~\ref{thm:crc} is not, by itself, a property of a good predictor.
It is a property of a \emph{decision contract}: the set of alerts on which the system agrees to act, and the semantic relation under which an automated output is counted as correct.
A system can reduce its measured FAR without improving its underlying predictive competence, either by abstaining more often or by reporting at a coarser semantic resolution.
We formalize these two mechanisms, show that they are the only two mechanisms by which the base classifier's fine-grained error can be hidden from the risk bound, and use the result to define a deployment certificate that cannot be satisfied by either mechanism alone.

\subsection{Decision contracts}

Let $\mathcal{Y} = \mathcal{T}$ be the finest available label space (Section~\ref{sec:formulation}) and let $h: \mathcal{X} \to \mathcal{Y}$ denote the top-one prediction of the base classifier $f$.
A selector $g: \mathcal{X} \to \{0,1\}$ determines whether the system acts: $g(x)=1$ recovers $\hat{t}(x) \neq \bot$, and $g(x)=0$ recovers $\hat{t}(x)=\bot$.
Correctness is represented by a reflexive acceptance relation $\mathcal{E} \subseteq \mathcal{Y}\times\mathcal{Y}$.
The finest relation, matching Theorem~\ref{thm:crc}, is $\mathcal{E}_{\mathrm{fine}} = \{(y,y): y\in\mathcal{Y}\}$.
A coarser contract is induced by the tactic map $\psi$ of Section~\ref{sec:formulation}: $\mathcal{E}_{\psi} = \{(y,y'): \psi(y)=\psi(y')\}$.
We use a relation rather than a function so that the same object accommodates ATT\&CK structures in which a technique belongs to more than one tactic.

\begin{definition}[Decision contract]
\label{def:contract}
A decision contract is a pair $C=(g,\mathcal{E})$.  Its harmful automated risk, action rate, accepted utility, and conditional error are
\begin{align}
R(C) &= \mathbb{P}\bigl(g(X)=1,\ (Y,h(X))\notin\mathcal{E}\bigr), \\
A(C) &= \mathbb{P}\bigl(g(X)=1\bigr), \\
U(C) &= \mathbb{P}\bigl(g(X)=1,\ (Y,h(X))\in\mathcal{E}\bigr) = A(C)-R(C), \\
Q(C) &= \mathbb{P}\bigl((Y,h(X))\notin\mathcal{E} \mid g(X)=1\bigr),
\end{align}
where $Q(C)$ is defined when $A(C)>0$.  Under $\mathcal{E}=\mathcal{E}_{\mathrm{fine}}$, $R(C)$ is exactly the FAR of Theorem~\ref{thm:crc} and $A(C)=1-\text{AR}$.
\end{definition}

\begin{proposition}[Risk factorization and vacuity]
\label{prop:risk_factorization}
For every decision contract with $A(C)>0$, $R(C)=A(C)\,Q(C)$.
Consequently, a bound $R(C)\leq\alpha$ alone implies neither a positive action rate nor a small conditional error: $g\equiv 0$ gives $R(C)=0$ trivially, and a contract with $A(C)\leq\alpha$ may have $Q(C)=1$ while still satisfying $R(C)\leq\alpha$.
\end{proposition}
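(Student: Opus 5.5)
The factorization is a one-line consequence of the definition of conditional probability, so I would start there. Fix a contract $C=(g,\mathcal{E})$ with $A(C)>0$. By Definition~\ref{def:contract}, $Q(C)=\mathbb{P}\bigl((Y,h(X))\notin\mathcal{E},\ g(X)=1\bigr)/\mathbb{P}(g(X)=1)$. The numerator is exactly $R(C)$ and the denominator is $A(C)$, so multiplying through gives $R(C)=A(C)\,Q(C)$. No assumptions on $f$, $h$, or $\mathcal{E}$ beyond measurability of the events are needed. In particular, reflexivity of $\mathcal{E}$ plays no role here.

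For the vacuity part I would exhibit two explicit contracts. First, take $g\equiv 0$. Then the event $\{g(X)=1\}$ is empty, so $R(C)=0\leq\alpha$ for every $\alpha\geq 0$, while $A(C)=0$; here $Q(C)$ is undefined, and that is precisely the degenerate case. This shows a risk bound alone does not force a positive action rate.

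Second, I would construct a contract with $0<A(C)\leq\alpha$ and $Q(C)=1$. The natural choice is to let $g$ select only inputs on which $h$ is wrong under $\mathcal{E}$. Define $B=\{x:\mathbb{P}((Y,h(x))\notin\mathcal{E}\mid X=x)=1\}$, or more simply work with a joint event. A cleaner route is to argue existentially, as the statement says "may have": take any distribution and classifier such that the error event $\{(Y,h(X))\notin\mathcal{E}\}$ contains, up to null sets, a measurable set $S$ of inputs with $0<\mathbb{P}(X\in S)\leq\alpha$, and set $g=\mathbf{1}_S$. Then $A(C)=\mathbb{P}(X\in S)\leq\alpha$, and $Q(C)=1$ by construction. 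By the factorization just proved, $R(C)=A(C)\cdot 1\leq\alpha$.

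The only mild obstacle is ensuring such an $S$ exists. The statement is existential, so a concrete instance suffices. For example, take a two-point input space where $h$ mislabels one point deterministically and that point has mass at most $\alpha$, or shrink $S$ using a non-atomic component. I would present this explicit toy instance rather than a general existence claim. Together, the two constructions show that $R(C)\leq\alpha$ implies neither $A(C)>0$ nor small $Q(C)$, which motivates the action-rate floor introduced later.
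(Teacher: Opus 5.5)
Your proposal is correct and follows the paper's own argument. The factorization $R(C)=A(C)\,Q(C)$ is the definition of conditional probability (the paper calls it the product rule), and the vacuity claims are the two boundary cases $A=0$ (via $g\equiv 0$) and $Q=1$ (so that $R=A\leq\alpha$); your explicit instance $g=\mathbf{1}_S$, with $S$ a deterministically misclassified point of mass at most $\alpha$, makes concrete the construction the paper leaves implicit.
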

\begin{proof}
The identity is the product rule for probability; the two constructions instantiate it at the boundary $A=0$ and at $Q=1$.
\end{proof}

Proposition~\ref{prop:risk_factorization} is not a hypothetical concern.  Section~\ref{sec:exp-degenerate} exhibits a real ML+CRC configuration in which $A(C)=0$ and the FAR bound is satisfied by construction.

\subsection{Contract weakening and an error-conservation law}

\begin{definition}[Contract order]
$C_1=(g_1,\mathcal{E}_1) \succeq C_2=(g_2,\mathcal{E}_2)$ if $g_1(x)\geq g_2(x)$ almost everywhere and $\mathcal{E}_1\subseteq\mathcal{E}_2$: the stronger contract acts on at least as many alerts and accepts no more semantic errors as correct.
\end{definition}

\begin{theorem}[Risk monotonicity under contract weakening]
\label{thm:contract_monotonicity}
If $C_1\succeq C_2$, then $R(C_2)\leq R(C_1)$, and the reduction decomposes exactly as
\begin{align}
R(C_1)-R(C_2) ={}& \mathbb{P}\bigl(g_1{=}1,g_2{=}0,(Y,h)\notin\mathcal{E}_1\bigr) \notag\\
&+ \mathbb{P}\bigl(g_2{=}1,(Y,h)\notin\mathcal{E}_1,(Y,h)\in\mathcal{E}_2\bigr),
\label{eq:weakening_decomposition}
\end{align}
where the first term is risk removed by additional deferral and the second is risk removed by semantic relaxation.
\end{theorem}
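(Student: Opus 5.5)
The plan is to write both risks as probabilities of events in the joint space of $(X,Y)$ and partition the event defining $R(C_1)$ into disjoint pieces. Set $B_i=\{(Y,h(X))\notin\mathcal{E}_i\}$ and $S_i=\{g_i(X)=1\}$, so that $R(C_i)=\mathbb{P}(S_i\cap B_i)$. The contract order $C_1\succeq C_2$ gives two inclusions. First, $S_2\subseteq S_1$ up to a null set, because $g_1\ge g_2$ almost everywhere and both selectors are $\{0,1\}$-valued. Second, $B_2\subseteq B_1$, because $\mathcal{E}_1\subseteq\mathcal{E}_2$ means that failing to lie in $\mathcal{E}_2$ implies failing to lie in $\mathcal{E}_1$. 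Together these give $S_2\cap B_2\subseteq S_1\cap B_1$ up to a null set, so monotonicity $R(C_2)\le R(C_1)$ follows immediately.

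For the exact decomposition, I will split $S_1\cap B_1$ according to whether $g_2$ acts. Up to null sets, $S_1\cap B_1$ is the disjoint union of
\[
(S_1\cap S_2^c\cap B_1)\quad\text{and}\quad(S_1\cap S_2\cap B_1).
\]
Since $S_2\subseteq S_1$ almost surely, the second piece equals $S_2\cap B_1$ up to a null set. I then split $S_2\cap B_1$ along $\mathcal{E}_2$ into
\[
(S_2\cap B_1\cap B_2^c)\quad\text{and}\quad(S_2\cap B_1\cap B_2).
\]
Because $B_2\subseteq B_1$, the last set is exactly $S_2\cap B_2$, whose probability is $R(C_2)$.

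Adding the probabilities gives
\[
R(C_1)=\mathbb{P}(g_1{=}1,g_2{=}0,B_1)+\mathbb{P}(g_2{=}1,B_1,(Y,h)\in\mathcal{E}_2)+R(C_2).
\]
Rearranging this is precisely Eq.~\eqref{eq:weakening_decomposition}. Both terms are nonnegative, which gives a second proof of monotonicity. The first term is error that $C_1$ automates and $C_2$ defers, so it is the deferral term. The second is error that both contracts automate and that $\mathcal{E}_2$ relabels as acceptable, so it is the semantic-relaxation term.

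No step is genuinely hard. The only point needing care is the almost-everywhere qualifier in $g_1\ge g_2$. It must be turned into a statement about events: the set $\{g_2=1,g_1=0\}$ has probability zero under the marginal of $X$. This lets me replace $S_1\cap S_2$ by $S_2$ without changing any probability. I will state this explicitly so the disjoint-union bookkeeping is exact rather than holding only up to an unspecified error.
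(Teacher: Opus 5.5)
Your proof is correct and is the event-level version of the paper's argument. The paper writes the same split as the pointwise identity $g_1\mathbf{1}\{(Y,h)\notin\mathcal{E}_1\}-g_2\mathbf{1}\{(Y,h)\notin\mathcal{E}_2\}=(g_1-g_2)\mathbf{1}\{(Y,h)\notin\mathcal{E}_1\}+g_2\mathbf{1}\{(Y,h)\notin\mathcal{E}_1,(Y,h)\in\mathcal{E}_2\}$ and takes expectations; your explicit treatment of the almost-everywhere qualifier on $g_1\geq g_2$ is slightly more careful than its one-line proof.
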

\begin{proof}
Since $g_2\leq g_1$ and $\mathcal{E}_1\subseteq\mathcal{E}_2$, the pointwise difference of risk indicators equals $(g_1-g_2)\mathbf{1}\{(Y,h)\notin\mathcal{E}_1\} + g_2\mathbf{1}\{(Y,h)\notin\mathcal{E}_1,(Y,h)\in\mathcal{E}_2\}$, which is non-negative; take expectations.
\end{proof}

A direct consequence is an accounting identity for the base classifier's fine-grained error.  Define $B(h)=\mathbb{P}(h(X)\neq Y)$, deferred error $D(C)=\mathbb{P}(g{=}0,h\neq Y)$, semantic masking $M(C)=\mathbb{P}(g{=}1,h\neq Y,(Y,h)\in\mathcal{E})$, and genuine correct automation $G(C)=\mathbb{P}(g{=}1,h{=}Y)$.

\begin{theorem}[Error conservation under a decision contract]
\label{thm:error_conservation}
For every reflexive $\mathcal{E}\supseteq\mathcal{E}_{\mathrm{fine}}$,
\begin{equation}
B(h) = R(C) + D(C) + M(C), \qquad U(C) = G(C) + M(C).
\label{eq:error_conservation}
\end{equation}
\end{theorem}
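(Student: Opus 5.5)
The proof partitions the sample space pointwise into disjoint events and then takes expectations. Fix $x$ and $y$, and write $h=h(x)$. Two binary facts classify each point: whether the system acts, $g(x)\in\{0,1\}$, and how the prediction relates to the label. For the second fact there are three mutually exclusive and exhaustive possibilities: (i) $h=y$; (ii) $h\neq y$ with $(y,h)\in\mathcal{E}$; (iii) $(y,h)\notin\mathcal{E}$. These are exhaustive trivially. The structural point is that case (iii) already implies $h\neq y$, and this uses reflexivity. Since $\mathcal{E}\supseteq\mathcal{E}_{\mathrm{fine}}$, we have $(y,y)\in\mathcal{E}$, so $(y,h)\notin\mathcal{E}$ forces $h\neq y$. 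Equivalently, $\{(Y,h)\notin\mathcal{E}\}\subseteq\{h\neq Y\}$.

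For the first identity, I will decompose the indicator of fine-grained error:
\begin{equation*}
\mathbf{1}\{h\neq Y\} = \mathbf{1}\{g=0,h\neq Y\} + \mathbf{1}\{g=1,h\neq Y,(Y,h)\notin\mathcal{E}\} + \mathbf{1}\{g=1,h\neq Y,(Y,h)\in\mathcal{E}\}.
\end{equation*}
The three events on the right are disjoint and together cover $\{h\neq Y\}$. By the inclusion noted above, the middle event equals $\{g=1,(Y,h)\notin\mathcal{E}\}$, whose probability is $R(C)$ by Definition~\ref{def:contract}. The first event has probability $D(C)$ and the third has probability $M(C)$. Taking expectations then yields $B(h)=R(C)+D(C)+M(C)$.

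For the second identity, I will split the acceptance event $\{g=1,(Y,h)\in\mathcal{E}\}$ according to whether $h=Y$. If $h=Y$, then $(Y,h)\in\mathcal{E}$ automatically by reflexivity, so this piece is exactly $\{g=1,h=Y\}$, with probability $G(C)$. The remaining piece, $\{g=1,h\neq Y,(Y,h)\in\mathcal{E}\}$, has probability $M(C)$. Taking expectations gives $U(C)=G(C)+M(C)$.

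The only step that needs care is the use of reflexivity. It is invoked twice: once to identify the harmful-risk event with its restriction to $\{h\neq Y\}$, and once to identify the exact-match event with its restriction to $\{(Y,h)\in\mathcal{E}\}$. Beyond that, the argument is finite additivity over a disjoint partition, in the same spirit as the pointwise argument for Theorem~\ref{thm:contract_monotonicity}. As a consistency check, I will verify that at $\mathcal{E}=\mathcal{E}_{\mathrm{fine}}$ we get $M(C)=0$, so the statement reduces to $B=R+D$ and $U=G$.
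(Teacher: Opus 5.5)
Your proof is correct and follows the paper's argument: it uses the same three-way disjoint partition of $\{h\neq Y\}$ into deferred, acted-and-rejected, and acted-and-accepted events, and the same split of the acted-and-accepted event by whether $h=Y$. You also make explicit the role of reflexivity, namely $\{(Y,h)\notin\mathcal{E}\}\subseteq\{h\neq Y\}$, which is what makes the middle event exactly the one defining $R(C)$ and which the paper's proof uses only implicitly.
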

\begin{proof}
On $\{h(X)\neq Y\}$, exactly one of three disjoint events occurs: the system defers ($D$); the system acts and $\mathcal{E}$ still rejects the error ($R$); the system acts and $\mathcal{E}$ accepts the error ($M$).  Among acted-upon cases accepted by $\mathcal{E}$, either $h=Y$ or the case is a masked error, giving the second identity.
\end{proof}

Eq.~\eqref{eq:error_conservation} is the paper's central accounting result: a fixed base classifier's error $B(h)$ does not disappear under a risk certificate, it is only reassigned among harmful automation, human deferral, and semantic masking.  A reported reduction in risk, or a reported gain in utility from coarsening (e.g., technique- to tactic-level reporting), is therefore only informative once $M(C)$ is reported alongside it.

\subsection{Risk transfer under semantic coarsening}
\label{sec:coarsening_theory}

Let $\phi:\mathcal{Y}\to\mathcal{Z}$ be a deterministic, possibly many-to-one map, and fix the same selector $g$ at both resolutions.

\begin{theorem}[Exact fine-to-coarse risk transfer]
\label{thm:coarsening_transfer}
$R_{\phi}(g) = R_{\mathrm{fine}}(g) - M_{\phi}(g)$, where $M_{\phi}(g)=\sum_{z}\sum_{y\neq y' \in \phi^{-1}(z)} \mathbb{P}(g{=}1,Y{=}y,h(X){=}y')$ is the within-fiber confusion mass.  Hence $R_{\phi}(g)\leq R_{\mathrm{fine}}(g)$, with equality iff $M_\phi(g)=0$.
\end{theorem}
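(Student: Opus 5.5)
The plan is to derive the identity as a special case of Theorem~\ref{thm:contract_monotonicity}, or equivalently of the error-conservation law, Theorem~\ref{thm:error_conservation}.

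\textbf{Step 1: set up the two contracts.} Take $C_1=(g,\mathcal{E}_{\mathrm{fine}})$ and $C_2=(g,\mathcal{E}_\phi)$, where $\mathcal{E}_\phi=\{(y,y'):\phi(y)=\phi(y')\}$. The relation $\mathcal{E}_\phi$ is reflexive and contains $\mathcal{E}_{\mathrm{fine}}$, so $C_1\succeq C_2$. Moreover $R_{\mathrm{fine}}(g)=R(C_1)$ and $R_\phi(g)=R(C_2)$.

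\textbf{Step 2: apply the decomposition.} Because the selector is the same at both resolutions, the deferral term in Eq.~\eqref{eq:weakening_decomposition} vanishes identically. Only the semantic-relaxation term survives:
\[
R(C_1)-R(C_2)=\mathbb{P}\bigl(g{=}1,\ h(X)\neq Y,\ \phi(h(X))=\phi(Y)\bigr).
\]

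\textbf{Step 3: identify this term with $M_\phi(g)$.} This is the step I expect to need the most care, although it is bookkeeping rather than a genuine obstacle. Partition the event by the coarse value $z=\phi(Y)$ and by the pair $(Y,h(X))=(y,y')$. The condition $\phi(h)=\phi(Y)=z$ with $h\neq Y$ means exactly that $y\neq y'$ both lie in the fiber $\phi^{-1}(z)$. Since $\mathcal{Y}$ is finite and these events are disjoint, countable additivity gives the double sum defining $M_\phi(g)$. The sum is over ordered pairs, so no double counting arises. As a cross-check, this term is exactly the masking mass $M(C_2)$ of Theorem~\ref{thm:error_conservation}. Applying that theorem to both contracts, with $D$ the same and $M(C_1)=0$, gives the same identity.

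\textbf{Step 4: the inequality and the equality case.} Each summand of $M_\phi(g)$ is a probability, so $M_\phi(g)\geq 0$, which yields $R_\phi(g)\leq R_{\mathrm{fine}}(g)$. Equality holds iff $M_\phi(g)=0$, directly from the identity. Equivalently, equality holds iff every within-fiber confusion $\mathbb{P}(g{=}1,Y{=}y,h{=}y')$ with $y\neq y'$ in a common fiber is zero. When $\phi$ is injective, as in Table~\ref{tab:mapping}, all fibers are singletons, so $M_\phi(g)=0$ automatically.
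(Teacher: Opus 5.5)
Your proposal is correct and matches the paper's argument. The paper notes directly that a fine-label error on an acted-upon alert stops being an error after coarsening exactly when $Y$ and $h(X)$ share a fiber of $\phi$, and subtracts that disjoint mass; this is the semantic-relaxation term you extract from Theorem~\ref{thm:contract_monotonicity} once the deferral term vanishes, and your fiber-by-fiber identification of it with $M_\phi(g)$ and your nonnegativity argument are sound.
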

\begin{proof}
A fine-label error stops being an error after coarsening exactly when the true and predicted labels share a fiber of $\phi$; subtract this disjoint mass from $R_{\mathrm{fine}}$.
\end{proof}

\begin{theorem}[Impossibility of reverse risk transfer]
\label{thm:reverse_impossibility}
If some fiber of $\phi$ contains two distinct labels $y\neq y'$, then $R_\phi(g)=0$ does not imply any non-trivial bound on $R_{\mathrm{fine}}(g)$: taking $g\equiv1$, $Y=y$ a.s., $h(X)=y'$ a.s.\ gives $R_\phi(g)=0$ and $R_{\mathrm{fine}}(g)=1$.
\end{theorem}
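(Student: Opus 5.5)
The plan is to prove the statement by exhibiting the stated construction and computing both risks directly from Definition~\ref{def:contract}. Fix a fiber $\phi^{-1}(z)$ containing $y\neq y'$. Take any distribution on $\mathcal{X}\times\mathcal{Y}$ with $Y=y$ almost surely. Take any base classifier $f$ whose top-one prediction is $h(X)=y'$ almost surely. Use the always-act selector $g\equiv 1$. Since the transfer statement concerns a fixed selector applied at both resolutions, as in Section~\ref{sec:coarsening_theory}, it suffices to compare $R_\phi(g)$ and $R_{\mathrm{fine}}(g)$ for this single triple. The triple is admissible because none of the definitions place any requirement on how $f$ relates to the data distribution.

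\textbf{Coarse risk.} Under $\mathcal{E}_\phi=\{(a,b):\phi(a)=\phi(b)\}$, the pair $(Y,h(X))=(y,y')$ lies in $\mathcal{E}_\phi$ almost surely, because $\phi(y)=\phi(y')=z$. Hence $R_\phi(g)=\mathbb{P}(g=1,(Y,h)\notin\mathcal{E}_\phi)=0$.

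\textbf{Fine risk.} Because $y\neq y'$, the pair $(y,y')$ is not in $\mathcal{E}_{\mathrm{fine}}$, and $g=1$ always. Hence $R_{\mathrm{fine}}(g)=\mathbb{P}(g=1)=1$.

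\textbf{Consistency check.} The two values agree with Theorem~\ref{thm:coarsening_transfer}. The within-fiber confusion mass is $M_\phi(g)=\mathbb{P}(g=1,Y=y,h=y')=1$, so the identity reads $0=1-1$. This shows the example sits exactly on the identity rather than contradicting it.

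\textbf{Meaning of ``no non-trivial bound.''} The next step is to turn the example into the impossibility claim. Suppose some function $\beta$ satisfied $R_{\mathrm{fine}}(g)\le\beta\bigl(R_\phi(g)\bigr)$ for all admissible triples. Then $\beta(0)\ge 1$. Since $R_{\mathrm{fine}}\le 1$ holds always, this $\beta$ is no better than the trivial bound, so even perfect coarse risk is uninformative.

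\textbf{Strengthening.} It is worth noting that the construction also defeats bounds depending on $R_\phi$ together with the action rate, since $A(C)=1$ here. One can also interpolate by mixing with $h=Y$, which gives $R_\phi=0$ with any $R_{\mathrm{fine}}\in[0,1]$.

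\textbf{Main obstacle.} The arithmetic is trivial; the only real care point is admissibility of the construction. In particular, a degenerate point-mass $Y$ together with an arbitrary $f$ must be allowed. I would handle this by placing $Y$ a.s. equal to $y$ with $X$ arbitrary, and letting $f$ put its top mass on $y'$ (e.g., $f_{y'}(x)=1$).
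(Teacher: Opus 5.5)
Your proposal is correct and takes the same approach as the paper, whose proof is exactly the stated construction ($g\equiv 1$, $Y=y$ a.s., $h(X)=y'$ a.s.) with $R_\phi(g)=0$ and $R_{\mathrm{fine}}(g)=1$ read off directly from the definitions. Your extra steps are sound elaborations the paper leaves implicit: the consistency check $0=1-1$ against the coarsening-transfer identity, the formalization of ``no non-trivial bound'' as forcing $\beta(0)\geq 1$, and the mixture argument giving any $R_{\mathrm{fine}}\in[0,1]$ with $R_\phi=0$.
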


In our 4-technique, 4-tactic mapping (Table~\ref{tab:mapping}), every fiber of $\psi$ is a singleton, so $M_\psi(g)\equiv 0$ identically: the technique-to-tactic identity of Section~\ref{sec:formulation}, and the empirical equality reported in Section~\ref{sec:rq4}, are the zero-masking boundary case of Theorem~\ref{thm:coarsening_transfer}, not an independent finding.
Theorem~\ref{thm:reverse_impossibility} is stated for the general many-to-one case relevant to the full ATT\&CK matrix, where multiple techniques share a tactic; evaluating $M_\phi(g)>0$ empirically requires technique-level labels finer than the 4-category benchmarks used here (Section~\ref{sec:limitations}), and we leave a direct empirical demonstration of non-trivial masking to future work on finer-grained taxonomies.

\subsection{Exact geometry of singleton automation}
\label{sec:singleton_geometry}

Write $p_k(x) = f_k(x)/\sum_{j\neq\mathrm{Normal}} f_j(x)$ so that $s(x,t_k)=1-p_k(x)$ (Eq.~\eqref{eq:nc_score}) and $\tau=1-\hat q$.  The prediction set is $\Gamma_\tau(x)=\{k: p_k(x)\geq\tau\}$.  Let $p_{(1)}(x)\geq p_{(2)}(x)$ be the largest and second-largest attack-normalized probabilities.

\begin{proposition}[Singleton interval characterization]
\label{prop:singleton_interval}
$|\Gamma_\tau(x)|=1 \iff p_{(2)}(x)<\tau\leq p_{(1)}(x)$.  Hence every $x$ defines a singleton interval $I(x)=(p_{(2)}(x),p_{(1)}(x)]$, and a global threshold automates $x$ iff $\tau\in I(x)$.
\end{proposition}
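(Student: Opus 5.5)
The argument works directly from the definition $\Gamma_\tau(x)=\{k: p_k(x)\geq\tau\}$, using the sorted attack-normalized probabilities $p_{(1)}(x)\geq p_{(2)}(x)\geq\cdots$. The key observation is that $\Gamma_\tau(x)$ is an upper level set of the map $k\mapsto p_k(x)$. Hence its cardinality equals the number of order statistics that are at least $\tau$:
\[
|\Gamma_\tau(x)| = \#\{j : p_{(j)}(x)\geq\tau\}.
\]
This holds with ties counted with multiplicity, since tied classes are distinct elements of $\mathcal{T}$. It follows that $|\Gamma_\tau(x)|$ is a non-increasing step function of $\tau$.

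For the forward direction, suppose $|\Gamma_\tau(x)|=1$. The set is non-empty and upward closed in the ordering, so it must contain a class attaining $p_{(1)}(x)$, which gives $\tau\leq p_{(1)}(x)$. It cannot contain a second class. In particular, the class in position $(2)$ is excluded, so $p_{(2)}(x)<\tau$.

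For the converse, suppose $p_{(2)}(x)<\tau\leq p_{(1)}(x)$. The top class is included in $\Gamma_\tau(x)$. Every other class $k$ satisfies $p_k(x)\leq p_{(2)}(x)<\tau$ and is therefore excluded, so $|\Gamma_\tau(x)|=1$.

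The only delicate point is ties. If $p_{(1)}(x)=p_{(2)}(x)$, then the interval $I(x)=(p_{(2)}(x),p_{(1)}(x)]$ is empty. This is consistent with the statement: any $\tau$ at or below the tied value admits both tied classes, and any $\tau$ above it admits none, so no threshold yields a singleton. I would state explicitly that $p_{(2)}$ is taken with multiplicity, which is exactly what makes the tie case come out right.

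The interval reformulation is then immediate. Using $\tau=1-\hat q$ and Theorem~\ref{thm:crc}'s rule, which automates exactly when $|\Gamma_{\hat q}(x)|=1$, the condition becomes $\tau\in I(x)$. Since $\Gamma_{\hat q}=\Gamma_\tau$ by Eq.~\eqref{eq:nc_score}, nothing else needs checking.
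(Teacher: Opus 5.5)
Your proof is correct. It is the argument the paper leaves implicit: the paper states Proposition~\ref{prop:singleton_interval} without a separate proof, as immediate from $\Gamma_\tau(x)$ being an upper level set of $k\mapsto p_k(x)$, which is exactly your count $|\Gamma_\tau(x)|=\#\{j:p_{(j)}(x)\geq\tau\}$. Your multiplicity convention for ties, under which $I(x)$ is empty when $p_{(1)}(x)=p_{(2)}(x)$, is the same one the paper relies on in Proposition~\ref{prop:variance_counterexample}, where tied top-two probabilities give $\kappa=0$.
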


Let $F_j^-(t)=\mathbb{P}(p_{(j)}(X)<t)$ for $j\in\{1,2\}$.

\begin{theorem}[Singleton capacity]
\label{thm:singleton_capacity}
The global-threshold action rate is $A(\tau)=F_2^-(\tau)-F_1^-(\tau)$.  The \emph{singleton capacity}
\begin{equation}
\kappa(f) = \sup_{\tau\in[0,1]} \bigl(F_2^-(\tau)-F_1^-(\tau)\bigr)
\label{eq:singleton_capacity}
\end{equation}
is the maximum action rate attainable by any global singleton threshold.  If $\kappa(f)<\rho$, no calibration method using a single global threshold can satisfy an action-rate requirement $A\geq\rho$, regardless of $\alpha$ or calibration-set size.
\end{theorem}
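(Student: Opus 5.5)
The plan is to derive the action-rate formula directly from Proposition~\ref{prop:singleton_interval} and then read off the capacity statement as a supremum argument.

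\emph{Action-rate formula.} Fix a global threshold $\tau\in[0,1]$. By Proposition~\ref{prop:singleton_interval}, the system automates $x$ exactly when $p_{(2)}(x)<\tau\leq p_{(1)}(x)$, so
\[
A(\tau)=\mathbb{P}\bigl(p_{(2)}(X)<\tau\bigr)-\mathbb{P}\bigl(p_{(2)}(X)<\tau,\ p_{(1)}(X)<\tau\bigr).
\]
Because $p_{(2)}\leq p_{(1)}$ pointwise, the event $\{p_{(1)}(X)<\tau\}$ is contained in $\{p_{(2)}(X)<\tau\}$. The joint probability therefore equals $F_1^-(\tau)$. Substituting gives $A(\tau)=F_2^-(\tau)-F_1^-(\tau)$. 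This is nonnegative, so $\kappa(f)$ is well defined in $[0,1]$.

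\emph{Capacity and impossibility.} Any calibration method that uses a single global threshold outputs some $\hat\tau$. This $\hat\tau$ may be random and may depend on $\alpha$, $n$ and the calibration data. Conditional on the calibration data, the deployed rule is the fixed-threshold rule at $\hat\tau$. Its action rate on a fresh test point is $A(\hat\tau)\leq\sup_\tau A(\tau)=\kappa(f)$. The same bound then holds after averaging over the calibration draw, since $\mathbb{E}[A(\hat\tau)]\leq\kappa(f)$.

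If $\kappa(f)<\rho$, every realized threshold therefore has $A<\rho$, and so does its expectation. No choice of $\alpha$ or calibration-set size changes this, because $\kappa(f)$ depends only on $f$ and the marginal law of $X$. It does not depend on labels, on $\alpha$, or on $n$.

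\emph{Main obstacle.} The main subtlety lies in justifying that a data-dependent $\hat\tau$ is covered by the fixed-$\tau$ analysis. This needs the test point to be independent of, or exchangeable with, the calibration set, so that conditioning on the calibration data is valid.

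A second point of care concerns thresholds outside $[0,1]$. Because the attack-normalized probabilities lie in $[0,1]$, any $\tau>1$ automates nothing. Any $\tau\leq 0$ puts every class in $\Gamma_\tau(x)$, so with $K\geq 2$ nothing is a singleton either. Restricting the supremum to $[0,1]$ is therefore without loss.

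Strict versus non-strict inequalities are already absorbed into the left-limit definition $F_j^-(t)=\mathbb{P}(p_{(j)}(X)<t)$, which matches the half-open interval $I(x)$ exactly.
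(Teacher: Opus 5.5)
Your proof is correct and is essentially the paper's own argument: the containment $\{p_{(1)}(X)<\tau\}\subseteq\{p_{(2)}(X)<\tau\}$ turns the singleton-interval probability from Proposition~\ref{prop:singleton_interval} into $F_2^-(\tau)-F_1^-(\tau)$, and the impossibility claim follows by taking the supremum over $\tau$. Your remarks on data-dependent $\hat\tau$ and on thresholds outside $[0,1]$ are sound additions that the paper leaves implicit, with one correction: identifying the conditional action rate with $A(\hat\tau)$ requires the test point to be \emph{independent} of the calibration data, and exchangeability alone does not suffice (in an exchangeable mixture, $\hat\tau$ can adapt to the component the test point is drawn from), although $A(\hat\tau)\leq\kappa(f)$ for the population action rate at the realized threshold holds regardless.
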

\begin{proof}
$\mathbb{P}(p_{(2)}<\tau\leq p_{(1)}) = \mathbb{P}(p_{(2)}<\tau)-\mathbb{P}(p_{(1)}<\tau)$ since $\{p_{(1)}<\tau\}\subseteq\{p_{(2)}<\tau\}$; take the supremum over $\tau$.
\end{proof}

$\kappa(f)$ is unconstrained by risk: it is the largest action rate reachable by \emph{some} threshold, without regard to how many of the resulting singleton predictions are wrong.  It therefore yields a one-directional diagnosis.  If $\kappa(f)<\rho$, no threshold---calibrated or otherwise---can reach the action-rate requirement $A\geq\rho$ at \emph{any} risk level, and this is certified from the classifier's probability output $f$ alone, without labels: \emph{structural incapacity}.  The converse does not hold: $\kappa(f)\geq\rho$ rules out this particular geometric obstruction, but it does not imply that a threshold exists which is simultaneously risk-feasible ($R(\tau)\leq\alpha$) and reaches $\rho$, because the action-maximizing $\tau$ in Eq.~\eqref{eq:singleton_capacity} need not be the same $\tau$ that keeps risk below $\alpha$.

\begin{definition}[Risk-feasible capacity]
\label{def:risk_feasible_capacity}
$\kappa_\alpha(f) = \sup\{A(\tau) : \tau\in[0,1],\ R(\tau)\leq\alpha\}$, the largest action rate attainable among thresholds that also satisfy the risk target.
\end{definition}

By definition $\kappa_\alpha(f)\leq\kappa(f)$, and $\kappa(f)<\rho \implies \kappa_\alpha(f)<\rho$, so structural incapacity certified via $\kappa(f)$ remains valid without reference to $\kappa_\alpha$.  Unlike $\kappa(f)$, $\kappa_\alpha(f)$ requires labeled data to evaluate $R(\tau)$ and is not a pre-deployment, label-free quantity.  We use the two together to distinguish three regimes rather than two: \emph{structural incapacity} ($\kappa(f)<\rho$, no threshold at any risk level reaches $\rho$); \emph{risk-constrained incapacity} ($\kappa(f)\geq\rho$ but $\kappa_\alpha(f)<\rho$, the geometry admits a high-action threshold only at risk levels the target $\alpha$ forbids); and \emph{threshold misalignment} ($\kappa_\alpha(f)\geq\rho$ but the calibrated $\hat\tau$ has $A(\hat\tau)<\rho$, a genuinely recoverable shortfall).  The \emph{actionability gap} $\mathrm{AG}(\hat\tau)=\kappa(f)-A(\hat\tau)$ remains a useful label-free upper bound on how much automation could conceivably be at stake, but Section~\ref{sec:failure} shows it routinely overstates what is risk-feasible; $\kappa_\alpha(f)-A(\hat\tau)$ is the quantity that actually determines whether recalibration can close the gap.

\begin{proposition}[Finite-sample capacity certificate]
\label{prop:capacity_certificate}
For an unlabeled readiness sample of size $m$, the empirical gap estimator $\widehat\kappa=\sup_\tau(\widehat F_2^- - \widehat F_1^-)$ satisfies, by the DKW inequality and a union bound over the two CDFs, $|\widehat\kappa-\kappa(f)|\leq 2\sqrt{\log(4/\delta)/(2m)}$ with probability at least $1-\delta$.  If $\widehat\kappa$ plus this margin falls below $\rho$, structural incapacity is certified without any test labels.  This certificate applies to $\kappa(f)$ only; $\kappa_\alpha(f)$ has no label-free analogue, since it is defined through the risk constraint $R(\tau)\leq\alpha$.
\end{proposition}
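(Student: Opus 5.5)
We would prove Proposition~\ref{prop:capacity_certificate} by reducing the deviation of the supremum of a difference of CDFs to uniform deviations of each CDF separately, and then controlling each with the Dvoretzky--Kiefer--Wolfowitz (DKW) inequality. We take the readiness sample $X_1,\dots,X_m$ to be i.i.d.\ from the deployment distribution; no labels are needed, since only $p_{(1)}(X_i)$ and $p_{(2)}(X_i)$ are used. We write $\widehat F_j^-(t)=\frac1m\sum_i \mathbf{1}[p_{(j)}(X_i)<t]$.

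The first step is deterministic. For any functions $a,b$ on $[0,1]$, $|\sup_\tau a(\tau)-\sup_\tau b(\tau)|\leq\sup_\tau|a(\tau)-b(\tau)|$. Applying this with $a=\widehat F_2^--\widehat F_1^-$ and $b=F_2^--F_1^-$ (the latter characterized in Theorem~\ref{thm:singleton_capacity}), and then the triangle inequality, gives
\begin{equation*}
|\widehat\kappa-\kappa(f)|\leq \sup_\tau|\widehat F_2^-(\tau)-F_2^-(\tau)|+\sup_\tau|\widehat F_1^-(\tau)-F_1^-(\tau)|.
\end{equation*}

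The second step is probabilistic. DKW with Massart's constant states $\mathbb{P}(\sup_t|\widehat F(t)-F(t)|>\varepsilon)\leq 2e^{-2m\varepsilon^2}$. Here the CDFs are left-continuous ($<t$ rather than $\leq t$), which is the main technical wrinkle. We handle it by noting that $\sup_t|\widehat F^-(t)-F^-(t)|$ equals the sup of the right-continuous deviation taken over left limits, so it is bounded by the same Kolmogorov statistic; equivalently, we apply DKW to $-p_{(j)}$. We then allocate $\delta/2$ to each of $j=1,2$. Solving $2e^{-2m\varepsilon^2}=\delta/2$ gives $\varepsilon=\sqrt{\log(4/\delta)/(2m)}$. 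A union bound makes both events hold simultaneously with probability at least $1-\delta$, so $|\widehat\kappa-\kappa(f)|\leq 2\varepsilon$, which is the stated margin. The two order statistics come from the same $X_i$ and are therefore dependent, but the union bound requires no independence.

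Finally, on this event $\kappa(f)\leq\widehat\kappa+2\varepsilon$. So if $\widehat\kappa+2\varepsilon<\rho$, then $\kappa(f)<\rho$, and Theorem~\ref{thm:singleton_capacity} certifies structural incapacity. This uses only unlabeled scores. The closing remark about $\kappa_\alpha(f)$ follows directly from Definition~\ref{def:risk_feasible_capacity}: the feasible set $\{\tau:R(\tau)\leq\alpha\}$ depends on $Y$ through $R$. Two distributions can share the same law of $f(X)$ yet have different label-conditional laws, and hence different $\kappa_\alpha$, as in the construction of Theorem~\ref{thm:reverse_impossibility}. So no function of unlabeled data can estimate it consistently.
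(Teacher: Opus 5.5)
Your proof is correct and follows the route the paper indicates in the statement itself: DKW applied to each of $F_1^-$ and $F_2^-$ at level $\delta/2$, followed by a union bound. Your sup-difference reduction and your left-continuity remark supply the details the paper leaves implicit. Your non-identifiability justification of the $\kappa_\alpha(f)$ remark is also sound and more explicit than the paper's one-line comment, though a direct witness is cleaner than a pointer to the coarsening construction of Theorem~\ref{thm:reverse_impossibility}: take a constant $f(X)$ with $p_{(1)}>p_{(2)}$ and set $Y$ to the top label versus the runner-up label, which gives $\kappa_\alpha=1$ versus $\kappa_\alpha=0$ for any $\alpha<1$.
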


\begin{proposition}[True-label score statistics do not determine capacity]
\label{prop:variance_counterexample}
Neither the mean nor the variance of the true-label nonconformity score, alone or combined with top-one accuracy, determines $\kappa(f)$.
\end{proposition}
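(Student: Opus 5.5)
The proof will be by explicit counterexample. We exhibit two classifiers $f^A$ and $f^B$ on the same input distribution and the same labels. They will have identical distributions of the true-label score $s(X,Y)=1-p_Y(X)$, and therefore identical mean and variance. They will also have identical top-one accuracy, but different singleton capacities $\kappa(f)$.

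The driving observation comes from Proposition~\ref{prop:singleton_interval} and Theorem~\ref{thm:singleton_capacity}. The quantity $\kappa(f)$ depends only on the joint behaviour of the pair $(p_{(1)}(X),p_{(2)}(X))$ through the intervals $I(x)=(p_{(2)}(x),p_{(1)}(x)]$. The true-label statistics, by contrast, depend only on the marginal law of $p_Y(X)$. We can therefore keep $p_Y$ fixed and move mass among the non-true coordinates so as to reshape $I(x)$.

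Concretely, take three attack classes and let $Y=t_3$ almost surely. Both classifiers will assign the true class $p_3(x)=0.3$ for every $x$. Hence $s(X,Y)\equiv 0.7$ in both cases: the mean is $0.7$ and the variance is $0$.

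\emph{Classifier $A$.} Set $p(x)=(0.7,\,0.3,\,0)$ for all $x$. Every interval is then $I(x)=(0.3,0.7]$, so any $\tau$ in this range gives $A(\tau)=1$, and $\kappa(f^A)=1$.

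\emph{Classifier $B$.} Split $\mathcal{X}$ into two halves of probability $1/2$ each. On the first half set $p(x)=(0.7,\,0.3,\,0)$. On the second half set $p(x)=(0.35,\,0.35,\,0.3)$. On the second half $p_{(1)}=p_{(2)}=0.35$, so $I(x)=\emptyset$ and no global threshold ever yields a singleton there. Using $A(\tau)=F_2^-(\tau)-F_1^-(\tau)$, we get $A(\tau)\le 1/2$ for every $\tau$, with equality for $\tau\in(0.3,0.7]$. Hence $\kappa(f^B)=1/2$.

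Next we check accuracy. In both classifiers the true label's probability $0.3$ is strictly below the top probability, so top-one accuracy is $0$ for both, whatever tie-breaking rule $h$ uses. The Normal-class renormalization is harmless: we specify the attack-normalized vector $p$ directly, and any Normal mass rescales consistently.

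The main obstacle is this: if both classifiers are always correct with matching $p_Y$, the intervals tend to overlap around $p_{(1)}$, and $\kappa$ stays at $1$. This is why the construction deliberately uses misclassified points and a tie at the top, which collapses $I(x)$ for half the population.

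If a counterexample with non-degenerate accuracy is preferred, we will mix both constructions with a common block of correctly classified points carrying the same $p$-vector in each. For example, take a block of mass $1-\lambda$ with $p=(0,0,1)$ and $Y=t_3$, and apply the two constructions above on the remaining mass $\lambda$. After adjusting that block's $p_Y$ appropriately, the true-label score mixture stays identical across $A$ and $B$ (with non-zero variance), and accuracy becomes $1-\lambda$ in both.

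The capacities then differ, $\kappa(f^A)\neq\kappa(f^B)$, provided the common block's interval overlaps $(0.3,0.7]$. The interval $(0,1]$ does overlap it, giving $\kappa(f^A)=1$ versus $\kappa(f^B)=1-\lambda/2$. Since every listed statistic is matched while $\kappa$ differs, no function of those statistics can determine $\kappa(f)$, which proves Proposition~\ref{prop:variance_counterexample}.
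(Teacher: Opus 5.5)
As written, your vectors contradict your own labeling, and the central claim fails. You set $Y=t_3$ a.s.\ and state $p_3(x)=0.3$. But the vector $(0.7,0.3,0)$, used for classifier $A$ and for the first half of classifier $B$, has $p_3=0$. Read literally, $A$'s true-label score is identically $1$. $B$'s score is $1$ on one half and $0.7$ on the other. So the means ($1$ vs.\ $0.85$) and variances ($0$ vs.\ $0.0225$) differ, and these are precisely the statistics the proof must match. The common $(1-\lambda)$-block in your extension cannot repair this, because it enters both mixtures identically.

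The fix is a permutation: replace $(0.7,0.3,0)$ by $(0.7,0,0.3)$. Because $\kappa$ depends only on the sorted pair $(p_{(1)},p_{(2)})$, your interval computations and capacities are unchanged. The true class still sits strictly below the top, so accuracy stays $0$. The true-label score is now identically $0.7$ in both classifiers. In the extension, both mixtures get the same two-point law at $0$ and $0.7$, so the phrase ``after adjusting that block's $p_Y$'' can simply be dropped.

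With this correction, your argument is the paper's: fix the true-class probability at $0.3$ and use a top-two tie at $0.35$ to empty $I(x)$. The paper makes $B$ tied on every input, giving $\kappa=1$ vs.\ $\kappa=0$; your half-mixture gives $\kappa=1/2$ instead. Your $\lambda$-extension is a mild strengthening: it shows the non-identifiability persists at accuracy $1-\lambda$ and positive score variance, not only in the paper's zero-accuracy, zero-variance corner.

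Separately, your ``main obstacle'' remark is not accurate and should not be given as the reason for using misclassified points. Its claim is that with always-correct classifiers and matching $p_Y$, the intervals overlap and $\kappa$ stays at $1$. Take two equally likely inputs with $Y=t_1$:
\begin{itemize}
\item With $p=(0.45,0.44,0.11)$ and $p=(0.40,0.39,0.21)$, the intervals $(0.44,0.45]$ and $(0.39,0.40]$ are disjoint, so $\kappa=1/2$.
\item With $p=(0.45,0.30,0.25)$ and $p=(0.40,0.35,0.25)$, the intervals $(0.30,0.45]$ and $(0.35,0.40]$ overlap, so $\kappa=1$.
\end{itemize}
Both cases have true-label scores $\{0.55,0.60\}$ and accuracy $1$. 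This does not affect your corrected proof, but it shows the phenomenon does not hinge on misclassification.
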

\begin{proof}
Let $K=3$, $Y=1$ a.s.  Predictor A outputs $(0.3,0.7,0)$ on every input; predictor B outputs $(0.3,0.35,0.35)$.  Both have true-label score $1-0.3=0.7$ (equal mean, zero variance) and zero top-one accuracy.  For A, $\tau\in(0.3,0.7]$ singletons every input, so $\kappa=1$.  For B, the top two probabilities are tied for every input, so no $\tau$ yields a singleton and $\kappa=0$.
\end{proof}

Proposition~\ref{prop:variance_counterexample} is a formal argument against the heuristic used in an earlier draft of this analysis (nonconformity-score variance combined with base accuracy) to diagnose deployability; Section~\ref{sec:failure} replaces that heuristic with $\kappa(f)$ as a label-free necessary condition and $\kappa_\alpha(f)$ (Definition~\ref{def:risk_feasible_capacity}) as the labeled quantity that determines whether a shortfall is actually recoverable by recalibration.

\subsection{Operational risk and surrogate slack}

Let $e(X,Y)=\mathbf{1}\{h(X)\neq Y\}$, $g_\tau(X)=\mathbf{1}\{p_{(2)}(X)<\tau\leq p_{(1)}(X)\}$, and $C(\tau)=\mathbb{P}(Y\notin\Gamma_\tau(X))=\mathbb{P}(p_Y(X)<\tau)$ the miscoverage event controlled directly by Theorem~\ref{thm:crc}.

\begin{theorem}[Surrogate-slack decomposition]
\label{thm:surrogate_slack}
For the singleton decision rule, $C(\tau) = R(\tau) + S(\tau)$, where $S(\tau)=\mathbb{P}(g_\tau{=}0,\,Y\notin\Gamma_\tau(X))$ is the deferred-miscoverage mass.
\end{theorem}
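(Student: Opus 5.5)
The plan is to partition the miscoverage event $\{Y\notin\Gamma_\tau(X)\}$ according to whether the singleton selector $g_\tau$ acts, and then to show that the acting part equals the harmful-risk event $R(\tau)=\mathbb{P}(g_\tau{=}1,\,h(X)\neq Y)$ of Definition~\ref{def:contract} under $\mathcal{E}_{\mathrm{fine}}$. Since $\{g_\tau=1\}$ and $\{g_\tau=0\}$ are complementary, additivity gives
\begin{equation*}
C(\tau)=\mathbb{P}\bigl(g_\tau{=}1,\,Y\notin\Gamma_\tau(X)\bigr)+\mathbb{P}\bigl(g_\tau{=}0,\,Y\notin\Gamma_\tau(X)\bigr).
\end{equation*}
The second term is $S(\tau)$ by definition, so it remains to show that the first term equals $R(\tau)$.

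For this I will use Proposition~\ref{prop:singleton_interval}. On $\{g_\tau=1\}$ we have $p_{(2)}(X)<\tau\leq p_{(1)}(X)$. Hence $\Gamma_\tau(X)$ contains exactly one label, namely the unique label attaining $p_{(1)}(X)$. This maximizer is unique because $p_{(2)}<p_{(1)}$ strictly on this event.

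I then identify that label with $h(X)$. The attack-normalized probabilities $p_k$ are a positive rescaling of $f_k$ over the attack classes. Therefore the argmax of $p$ coincides with the top-one prediction of $f$ restricted to $\mathcal{T}$. With $h$ defined as the top-one label in $\mathcal{Y}=\mathcal{T}$, as in the decision-contract setup, we get $\Gamma_\tau(X)=\{h(X)\}$ on $\{g_\tau=1\}$. Consequently
\begin{equation*}
\{g_\tau=1,\ Y\notin\Gamma_\tau(X)\}=\{g_\tau=1,\ Y\neq h(X)\}.
\end{equation*}
Taking probabilities gives $R(\tau)$, and the decomposition follows. As a by-product, $S(\tau)\geq 0$ recovers the first inequality of Theorem~\ref{thm:crc}: $R(\tau)\leq C(\tau)$.

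The main obstacle is the identification step $\Gamma_\tau(X)=\{h(X)\}$. It can fail only if $h$ is defined through the unnormalized $f$ including a Normal class, or if ties in the top probability are broken arbitrarily. I will handle this in two ways. First, I will state that $h$ is the argmax over attack classes, which is the convention implicit in Section~\ref{sec:singleton_geometry}. Second, I will note that ties at the top cannot occur on $\{g_\tau=1\}$, since $p_{(2)}<\tau\leq p_{(1)}$ forces a strict maximum. Therefore any tie-breaking rule for $h$ agrees with the singleton on the acting event, and the identity holds without further assumptions.
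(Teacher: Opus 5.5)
Your proposal is correct and follows the paper's proof. Like the paper, you partition the miscoverage event by $g_\tau$, identify the acting part with $R(\tau)$ because $\Gamma_\tau(X)$ is a singleton there, and read off the deferred part as $S(\tau)$ by definition. Your extra step, showing that the singleton is exactly $\{h(X)\}$ on $\{g_\tau=1\}$ (the maximum is strict there and attack-normalization preserves the argmax), makes explicit what the paper leaves implicit; it is not a different route.
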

\begin{proof}
Partition the miscoverage event by $g_\tau$.  When $g_\tau=1$ the set is a singleton, so the true label is excluded iff the unique output is wrong, giving $R(\tau)$; the remainder is $S(\tau)$.
\end{proof}

Conformal miscoverage control is therefore a valid but generally conservative surrogate for wrong automated attribution: the conservatism is exactly $S(\tau)$, the risk budget spent on alerts that are already deferred and hence operationally harmless.  Section~\ref{sec:exp-slack} measures $S(\tau)$ directly on the calibrated B2/B3 configurations.

\subsection{Non-degenerate actionability certificates}

\begin{definition}[$(\alpha,\rho)$-actionable contract]
\label{def:actionable}
$C=(g,\mathcal{E}_{\mathrm{req}})$ is $(\alpha,\rho)$-actionable if $R(C)\leq\alpha$ and $A(C)\geq\rho$ for a required action-rate floor $\rho>0$.
\end{definition}

\begin{theorem}[Bridge from unconditional risk to useful automation]
\label{thm:actionability_bridge}
If $C$ is $(\alpha,\rho)$-actionable, then $U(C)\geq\rho-\alpha$ and $Q(C)\leq\alpha/\rho$.
\end{theorem}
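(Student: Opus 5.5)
The plan is to derive both inequalities directly from the two identities in Definition~\ref{def:contract} and Proposition~\ref{prop:risk_factorization}, using the two hypotheses $R(C)\leq\alpha$ and $A(C)\geq\rho$ of Definition~\ref{def:actionable}. No probabilistic argument beyond these accounting identities should be needed. The theorem is a deterministic consequence of the two constraints, so it holds for whatever contract satisfies them, including the population contract certified in part by Theorem~\ref{thm:crc}.

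First, the utility bound. Definition~\ref{def:contract} gives $U(C)=A(C)-R(C)$, since the events $\{g=1,(Y,h)\in\mathcal{E}\}$ and $\{g=1,(Y,h)\notin\mathcal{E}\}$ partition $\{g=1\}$. Substituting the floor $A(C)\geq\rho$ and the ceiling $R(C)\leq\alpha$ gives $U(C)\geq\rho-\alpha$. If $\mathcal{E}_{\mathrm{req}}=\mathcal{E}_{\mathrm{fine}}$, then $M(C)=0$ and Theorem~\ref{thm:error_conservation} gives $U(C)=G(C)$. This identifies the bound as a lower bound on genuine correct automation. In general the bound is on accepted utility, and I would remark that $G(C)\geq\rho-\alpha-M(C)$.

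Second, the conditional-error bound. Because $\rho>0$ and $A(C)\geq\rho$, we have $A(C)>0$, so $Q(C)$ is well defined. Proposition~\ref{prop:risk_factorization} gives $R(C)=A(C)Q(C)$, hence $Q(C)=R(C)/A(C)\leq\alpha/\rho$. Here the numerator is bounded above and the denominator below, and both are non-negative.

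I expect no step to be a real obstacle. The only care needed is to check that $Q(C)$ is defined, which the assumption $\rho>0$ guarantees, and to note that the bounds are informative only when $\rho>\alpha$. When $\rho\leq\alpha$, the utility bound is vacuous and $\alpha/\rho\geq 1$, which recovers the degenerate regime of Proposition~\ref{prop:risk_factorization}.
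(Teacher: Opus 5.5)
Your proposal is correct and matches the paper's proof: $U=A-R\geq\rho-\alpha$ follows directly from the definition, and $Q=R/A\leq\alpha/\rho$ follows from the factorization in Proposition~\ref{prop:risk_factorization}, with $\rho>0$ guaranteeing that $Q$ is defined. Your side remarks are also correct additions that the paper leaves implicit: $G(C)\geq\rho-\alpha-M(C)$ in general, $U(C)=G(C)$ under $\mathcal{E}_{\mathrm{fine}}$, and both bounds are vacuous when $\rho\leq\alpha$.
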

\begin{proof}
$U=A-R\geq\rho-\alpha$ directly; $Q=R/A\leq\alpha/\rho$ by Proposition~\ref{prop:risk_factorization}.
\end{proof}

Theorem~\ref{thm:actionability_bridge} excludes the all-abstain solution by construction: e.g., $\alpha=0.05,\rho=0.80$ jointly certify at least 75\% correct automation and at most 6.25\% conditional error among automated outputs — a guarantee that an unconditional FAR bound alone cannot provide (Proposition~\ref{prop:risk_factorization}).

Definition~\ref{def:actionable} and Theorem~\ref{thm:actionability_bridge} are population-level statements: $R(C)\leq\alpha$ is certified at deployment by the conformal calibration of Theorem~\ref{thm:crc}, but $A(C)\geq\rho$ as written requires knowing the true population action rate, which a practitioner does not have.  Checking $A(\hat\tau)\geq\rho$ against a single realized test set, as an empirical measurement, is not itself a certificate: it carries no stated confidence level and can be violated by an unlucky sample.  We close this gap with a finite-sample bound on the deployed threshold's action rate, giving a certificate whose action-rate side is exchangeability-based in the same sense as its risk side.

\begin{proposition}[Finite-sample action-rate certificate]
\label{prop:action_rate_certificate}
Let $\hat\tau$ be fixed independently of a certification sample $\{X_i\}_{i=1}^{m}$ drawn from the deployment distribution, and let $\widehat A(\hat\tau)=\frac{1}{m}\sum_i \mathbf{1}\{g_{\hat\tau}(X_i)=1\}$.  For $\varepsilon_m(\delta)=\sqrt{\log(1/\delta)/(2m)}$, the one-sided Hoeffding bound gives, with probability at least $1-\delta$,
\begin{equation}
A(\hat\tau) \;\geq\; \underline{A}_\delta \;:=\; \max\{0,\ \widehat A(\hat\tau)-\varepsilon_m(\delta)\}.
\label{eq:action_rate_lcb}
\end{equation}
If, in addition, $R(C_{\hat\tau})\leq\alpha$ (Theorem~\ref{thm:crc}'s standard marginal conformal guarantee) and $\underline{A}_\delta\geq\rho$, then substituting $\underline A_\delta$ for $\rho$ in Theorem~\ref{thm:actionability_bridge} gives $U(C_{\hat\tau})\geq\underline A_\delta-\alpha$ and $Q(C_{\hat\tau})\leq\alpha/\underline A_\delta$ on the probability-$(1-\delta)$ event $\{A(\hat\tau)\geq\underline A_\delta\}$.
\end{proposition}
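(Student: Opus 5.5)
The plan is to condition on $\hat\tau$ and apply Hoeffding to the Bernoulli indicators of automation, then transfer the population-level Theorem~\ref{thm:actionability_bridge} with $\underline A_\delta$ in place of $\rho$.

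\textbf{Step 1: conditional i.i.d.\ structure.} Because $\hat\tau$ is fixed independently of $\{X_i\}_{i=1}^m$, conditional on $\hat\tau$ the indicators $Z_i=\mathbf{1}\{g_{\hat\tau}(X_i)=1\}$ are i.i.d.\ Bernoulli. Their mean is $\mathbb{P}(p_{(2)}(X)<\hat\tau\leq p_{(1)}(X))$, which equals $A(\hat\tau)$ by Proposition~\ref{prop:singleton_interval} and Theorem~\ref{thm:singleton_capacity}. I will state explicitly that independence of the certification sample is used here. This requires that the sample be disjoint from $\mathcal{D}_{\mathrm{cal}}$ (or that $\hat\tau$ come from data independent of it); otherwise the indicators are only exchangeable and the argument breaks.

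\textbf{Step 2: one-sided Hoeffding.} Apply the one-sided Hoeffding inequality to the $[0,1]$-valued $Z_i$:
\[
\mathbb{P}\bigl(\widehat A(\hat\tau)-A(\hat\tau)\geq\varepsilon \mid \hat\tau\bigr)\leq e^{-2m\varepsilon^2}.
\]
Setting $\varepsilon=\varepsilon_m(\delta)$ makes the right-hand side equal to $\delta$. On the complement, $A(\hat\tau)>\widehat A(\hat\tau)-\varepsilon_m(\delta)$. Since $A(\hat\tau)\geq0$ trivially, the truncation at $0$ is harmless, which gives Eq.~\eqref{eq:action_rate_lcb}. Integrating over $\hat\tau$ turns the conditional statement into the unconditional probability-$(1-\delta)$ claim.

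\textbf{Step 3: transfer to the bridge.} Work on the event $E=\{A(\hat\tau)\geq\underline A_\delta\}$ and suppose $\underline A_\delta\geq\rho>0$. Then $A(C_{\hat\tau})\geq\underline A_\delta$, and by hypothesis $R(C_{\hat\tau})\leq\alpha$. Theorem~\ref{thm:actionability_bridge} is purely algebraic in $(A,R)$, so I will rerun its proof with floor $\underline A_\delta$ rather than cite it as a black box. This gives $U=A-R\geq\underline A_\delta-\alpha$, and by Proposition~\ref{prop:risk_factorization}, $Q=R/A\leq\alpha/\underline A_\delta$. The division is legitimate because $A\geq\underline A_\delta\geq\rho>0$.

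\textbf{Main obstacle: interpreting the risk hypothesis.} The risk hypothesis mixes probabilistic statements. Theorem~\ref{thm:crc} is a marginal guarantee averaged over calibration draws, not a bound on $R(C_{\hat\tau})$ for the realized $\hat\tau$. I will handle this exactly as the statement does: treat $R(C_{\hat\tau})\leq\alpha$ as an \emph{additional hypothesis} ("if, in addition"), so the final conclusion holds on $E$ intersected with that condition. I will not attempt a joint $(1-\delta)$ statement that absorbs the conformal risk into a single confidence level. I will add a remark that a PAC-style (training-conditional) risk bound at level $\delta'$ would, by a union bound, give a joint guarantee at level $1-\delta-\delta'$.
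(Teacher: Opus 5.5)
Your proposal is correct and follows the paper's proof. Conditional on the independently fixed $\hat\tau$, the automation indicators are i.i.d.\ Bernoulli$(A(\hat\tau))$, and the one-sided Hoeffding inequality at $\varepsilon_m(\delta)$ gives the lower confidence bound; rerunning the algebra of Theorem~\ref{thm:actionability_bridge} with $\underline A_\delta$ in place of $\rho$, treating $R(C_{\hat\tau})\leq\alpha$ as a separate hypothesis outside the $(1-\delta)$ event exactly as the paper does, then yields both stated bounds on that same event.
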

\begin{proof}
$\mathbf{1}\{g_{\hat\tau}(X_i)=1\}$ are i.i.d.\ Bernoulli$(A(\hat\tau))$ since $\hat\tau$ is fixed before the certification draw; the one-sided Hoeffding inequality gives $\mathbb{P}(A(\hat\tau)<\widehat A(\hat\tau)-\varepsilon_m(\delta))\leq\delta$, which rearranges to Eq.~\eqref{eq:action_rate_lcb}.  Substituting $\underline A_\delta$ for $\rho$ in Theorem~\ref{thm:actionability_bridge}'s proof, which uses only $A(C)\geq\rho$, gives the stated bounds on the same probability-$(1-\delta)$ event.
\end{proof}

The two sides of the certificate carry genuinely different guarantees, and we do not combine them into a single confidence level.  $R(C_{\hat\tau})\leq\alpha$ is the standard split-conformal marginal statement of Theorem~\ref{thm:crc}: a probability over the joint draw of calibration set and test point, not a $(1-\delta)$-probability statement about the population risk of this specific realized $\hat\tau$.  $A(\hat\tau)\geq\underline A_\delta$ is a genuine finite-sample, $(1-\delta)$-confidence lower bound for the fixed, already-realized $\hat\tau$, since $\hat\tau$ is held constant while $\{X_i\}$ is redrawn.  Reporting both bounds together gives a practitioner two guarantees of different character, not one guarantee at a single stated level; a training-conditional or PAC-style high-probability version of the risk side, combined with $\delta_R+\delta_A$ union-bounding, would be needed to state a single joint confidence level, which we leave for future work.

Proposition~\ref{prop:action_rate_certificate} converts $(\alpha,\rho)$-actionability from a definition that a realized action rate happens to satisfy into a certificate a practitioner can compute before trusting a deployment: fix $\hat\tau$ on a calibration split, measure $\widehat A(\hat\tau)$ on a disjoint certification split, and report $\underline A_\delta$ rather than the raw empirical rate.  We compute this bound for the paper's 18 LLM configurations in Section~\ref{sec:exp-certificate}.

\subsection{Testable predictions}
\label{sec:theory_predictions}

The framework yields pre-specified predictions checked against existing D2 data with no new inference runs, except where noted:
\begin{enumerate}[leftmargin=*,itemsep=2pt]
\item[\textbf{P1}] The observed automation rate equals $\widehat F_2^-(\tau)-\widehat F_1^-(\tau)$ up to numerical ties (Theorem~\ref{thm:singleton_capacity}).
\item[\textbf{P2}] $\mathrm{Utility}(\alpha) \leq \kappa(f)$ for every configuration and $\alpha$.
\item[\textbf{P3}] Low-utility configurations are not structural incapacity ($\kappa(f)$ near zero); Definition~\ref{def:risk_feasible_capacity}'s risk-feasible capacity $\kappa_\alpha(f)$, not the unconstrained $\kappa(f)$, determines whether the remaining shortfall is genuine threshold misalignment ($\kappa_\alpha(f)\geq\rho$, recoverable by recalibration) or risk-constrained incapacity ($\kappa_\alpha(f)<\rho$, not recoverable at the given $\alpha$); a genuine structural-incapacity instance, if one occurs in this benchmark suite, requires a dedicated per-sample re-run of the ML pipeline and a training-stability check to confirm it is not an artifact (Section~\ref{sec:exp-degenerate}).
\item[\textbf{P4}] $C(\tau)-R(\tau)=S(\tau)\geq 0$ across all 18 LLM configurations, not only the single illustrative configuration of Section~\ref{sec:rq3}.
\item[\textbf{P5}] Under a genuinely many-to-one $\phi$ (real fine-grained attack subtypes collapsed onto the 4-category ATT\&CK mapping), $R_\phi(g) = R_{\mathrm{fine}}(g) - M_\phi(g)$ holds exactly, with $M_\phi(g) > 0$ (Section~\ref{sec:coarsening_theory}).
\item[\textbf{P6}] Configurations whose certified action-rate lower bound $\underline A_\delta$ (Proposition~\ref{prop:action_rate_certificate}) meets a floor $\rho$, jointly with $R\leq\alpha$, satisfy $U\geq\underline A_\delta-\alpha$ and $Q\leq\alpha/\underline A_\delta$; configurations with low FAR but near-zero action rate fail the certificate by construction, and the certificate is now stated at a declared confidence level rather than against a single realized test set.
\item[\textbf{P7}] $B(h) = R(C) + D(C)$ (the $\mathcal{E}=\mathcal{E}_{\mathrm{fine}}$ special case of Theorem~\ref{thm:error_conservation}, where $M(C)\equiv 0$) holds exactly for every configuration and $\alpha$.
\end{enumerate}

\section{Instantiating the Decision Contract on ATT\&CK-Aligned Triage}
\label{sec:method}

We now instantiate the framework of Section~\ref{sec:theory_contract} on the ATT\&CK attribution task of Section~\ref{sec:formulation}.
The deployed decision contract is $C_{\hat\tau}=(g_{\hat\tau},\mathcal{E}_{\mathrm{fine}})$: the selector $g_{\hat\tau}$ is the singleton rule of Proposition~\ref{prop:singleton_interval} at threshold $\hat\tau=1-\hat q$, and correctness is judged at the finest ATT\&CK technique resolution (Section~\ref{sec:rq4} instantiates the coarser tactic-level contract $\mathcal{E}_\psi$ as a comparison point).
Calibrating $\hat q$ (Algorithm~\ref{alg:crc}) certifies $R(C_{\hat\tau})\leq\alpha$ by Theorem~\ref{thm:crc}; Section~\ref{sec:results} additionally reports $A(C_{\hat\tau})$, $\kappa(f)$, the actionability gap, and the certificate of Proposition~\ref{prop:action_rate_certificate} for every configuration, rather than treating $R(C_{\hat\tau})\leq\alpha$ as sufficient on its own.

\subsection{CRC Calibration Procedure}

Figure~\ref{fig:architecture} illustrates the calibration and inference pipeline that produces $g_{\hat\tau}$, and Algorithm~\ref{alg:crc} gives the procedure in full.
The procedure requires only a held-out calibration set of labeled attack alerts; the base LLM classifier $f$ is not retrained or fine-tuned.

\begin{figure}[t]
\centering
\fbox{\parbox{0.92\linewidth}{\centering
\vspace{0.8em}
IDS alert $\rightarrow$ LLM/ML classifier $\rightarrow$ attack-class probabilities $\rightarrow$ conformal set $\Gamma_{\hat{q}}(x)$\\[0.4em]
$|\Gamma_{\hat{q}}(x)|=1$: automated ATT\&CK attribution \qquad
$|\Gamma_{\hat{q}}(x)|\neq1$: defer to analyst
\vspace{0.8em}}}
\caption{CRC-based ATT\&CK alert triage framework. The abstention layer wraps the base classifier without modifying it. The system emits an automated ATT\&CK attribution only when the conformal set contains a single technique; empty or multi-technique sets are deferred.}
\label{fig:architecture}
\end{figure}

\begin{algorithm}[t]
\caption{CRC-based ATT\&CK Alert Triage}
\label{alg:crc}
\begin{algorithmic}[1]
\REQUIRE Calibration set $\mathcal{D}_{\text{cal}} = \{(x_i, y_i)\}_{i=1}^{n}$, target FAR $\alpha$, classifier $f$
\ENSURE Threshold $\hat{q}$
\STATE Compute NC scores: $s_i \gets s(x_i, y_i)$ for all $i \in [n]$ \hfill \textit{// Eq.~\eqref{eq:nc_score}}
\STATE Sort: $s_{(1)} \leq s_{(2)} \leq \cdots \leq s_{(n)}$
\STATE $\hat{q} \gets s_{(\lceil (1 - \alpha)(n + 1) \rceil)}$
\STATE \textbf{At test time}, for each alert $x$:
\STATE \quad $\Gamma_{\hat{q}}(x) \gets \{t_k \in \mathcal{T}: s(x,t_k) \leq \hat{q}\}$
\IF{$|\Gamma_{\hat{q}}(x)| = 1$}
    \STATE Output the unique technique in $\Gamma_{\hat{q}}(x)$
\ELSE
    \STATE Abstain: $\hat{t}(x) = \bot$ \hfill \textit{// Defer to analyst}
\ENDIF
\end{algorithmic}
\end{algorithm}

\subsection{Proxy Validity}
\label{sec:proxy}

Our nonconformity score $s(x, t_k)$ is derived from the LLM's softmax probabilities over IDS attack categories, not from a model trained directly on ATT\&CK labels.
The validity of this proxy rests on the bijective mapping $\phi$ (Section~\ref{sec:mapping}): since each attack category corresponds to exactly one ATT\&CK technique, the ordering induced by $f_k(x)$ over attack categories is identical to the ordering over ATT\&CK techniques.

The conformal risk statement does not require calibrated probabilities, but utility depends strongly on whether the score ranks plausible techniques ahead of implausible ones---this ranking quality is exactly what $\kappa(f)$ (Theorem~\ref{thm:singleton_capacity}) measures directly from $p_{(1)}(x)$ and $p_{(2)}(x)$, so we do not additionally require a separate calibration diagnostic (Section~\ref{sec:limitations} reports one supplementary check).

\subsection{Practical Considerations}

\paragraph{Calibration set size.}
The CRC threshold $\hat{q}$ depends on the calibration set size $n$.
Small $n$ yields conservative thresholds (high abstention rate), while large $n$ gives tighter thresholds.
We investigate the $n$--utility tradeoff in Section~\ref{sec:ablation}.

\paragraph{Model selection.}
CRC is a post-hoc wrapper: it does not constrain the choice of base classifier.
A model whose top-1 and top-2 attack-class probabilities are well-separated on most inputs (high singleton capacity $\kappa(f)$, Theorem~\ref{thm:singleton_capacity}) admits lower abstention rates for the same risk target, but the risk bound itself does not require this separation.
A model with poor separation simply triggers more abstentions, which is the correct conservative behavior---and, if $\kappa(f)$ is itself low, no calibration choice can recover a higher action rate (Section~\ref{sec:failure}).

\section{Experimental Setup}
\label{sec:setup}

\subsection{Datasets}

We use three IDS benchmark datasets spanning different network environments and attack types (Table~\ref{tab:datasets}).
Only attack-class samples are used (Normal traffic is excluded), since ATT\&CK attribution applies only to detected threats.

\begin{table*}[t]
\centering
\caption{Dataset characteristics. Only attack-class samples are used for ATT\&CK attribution.}
\label{tab:datasets}
\begin{tabular}{lccc}
\toprule
Dataset & Attack Classes & ATT\&CK Techniques & Test Samples \\
\midrule
CIC-IDS-2018~\cite{sharafaldin2018toward} & 3 & T1498, T1110, T1190 & 1{,}488 \\
HIKARI-2021~\cite{ferriyan2021generating} & 2 & T1110, T1190 & 226 \\
RT-IoT2022~\cite{sharmila2023detection} & 3 & T1498, T1190, T1046 & 1{,}779 \\
\bottomrule
\end{tabular}
\end{table*}

\subsection{LLM Backbones}

We evaluate six open-source LLMs spanning three model families and parameter scales:
Gemma-2 9B~\cite{team2024gemma}, LLaMA-3 8B~\cite{dubey2024llama}, Mistral 7B~\cite{jiang2023mistral}, Qwen-3 8B, Qwen-3 14B, and Qwen-3 32B~\cite{yang2025qwen3}.
All models perform zero-shot IDS classification: network flow features are serialized into natural-language prompts, and the model outputs logit-based probabilities over attack categories.
No fine-tuning is performed.

\subsection{ML Baselines}

To evaluate whether the CRC framework benefits from LLM backbones over traditional classifiers, we include two ML baselines with the same CRC abstention layer:
XGBoost~\cite{chen2016xgboost} and LightGBM~\cite{ke2017lightgbm}, trained on the same attack-class features with \texttt{predict\_proba} providing the softmax-equivalent probability vector.

\subsection{Baselines}

\begin{itemize}[leftmargin=*]
\item \textbf{B1-Argmax}: The LLM's top-1 prediction with no abstention. FAR equals the model's misclassification rate.
\item \textbf{B2-Threshold}: Abstain when $\max_k f_k(x) < \tau$ for a hand-tuned threshold $\tau$. We sweep $\tau$ to find the best FAR--utility tradeoff.
\item \textbf{B3-CRC} (ours): CRC-calibrated abstention with a formal false-attribution risk target.
\item \textbf{B4-ML-CRC}: XGBoost or LightGBM with the same CRC abstention framework, using identical NC score design and calibration procedure.
\end{itemize}

\subsection{Evaluation Metrics}

\begin{itemize}[leftmargin=*]
\item \textbf{FAR} (False Automated Attribution Risk): Fraction of all attack alerts that are automatically attributed to the wrong technique.
\item \textbf{AR} (Abstention Rate): Fraction of alerts deferred to human analysts.
\item \textbf{Utility}: Fraction of all attack alerts that receive a correct automated attribution.
\item \textbf{Automation rate}: $1-\text{AR}$, the fraction of alerts receiving any automated singleton attribution, whether correct or incorrect.
\item \textbf{Conditional error}: Fraction of automated singleton attributions that are wrong. This diagnostic metric is reported to characterize model behavior, but it is not the formal conformal risk target.
\item \textbf{Coverage}: Fraction of alerts where the true technique is in the prediction set (for multi-label evaluation).
\end{itemize}

\subsection{Configuration}

All LLM+CRC experiments use $n_{\text{cal}} = 200$ calibration samples (stratified by class), the remaining attack samples for testing, 5 random seeds for calibration/test splits, and $\alpha \in \{0.01, 0.05, 0.10, 0.20\}$.
The primary evaluation point is $\alpha = 0.05$.
All computations are CPU-only post-processing of pre-computed LLM logit outputs.

\section{Results}
\label{sec:results}

\subsection{RQ1: Does CRC Guarantee FAR $\leq \alpha$?}
\label{sec:rq1}

Table~\ref{tab:guarantee} summarizes the empirical false automated attribution risk across all configurations.
We apply a stringent operational criterion---FAR $\leq \alpha$ across \emph{all} 5 random seeds---which is stricter than the marginal risk statement itself.
Under this criterion, 65 of 72 configurations pass (90.3\%).

\begin{table*}[t]
\centering
\caption{CRC empirical verification across 18 model--dataset configurations and 4 $\alpha$ levels. FAR is the false automated attribution risk over all attack alerts. ``Pass'' indicates empirical FAR $\leq \alpha$ across all 5 random seeds. $\pm$ denotes standard deviation over seeds.}
\label{tab:guarantee}
\begin{tabular}{lcccc}
\toprule
$\alpha$ & FAR (mean $\pm$ std) & Utility (mean) & Coverage (mean) & Pass Rate \\
\midrule
0.01 & $0.002 \pm 0.003$ & 0.704 & 0.995 & 17/18 \\
0.05 & $0.015 \pm 0.014$ & 0.834 & 0.960 & 16/18 \\
0.10 & $0.025 \pm 0.031$ & 0.813 & 0.910 & 16/18 \\
0.20 & $0.040 \pm 0.058$ & 0.745 & 0.820 & 16/18 \\
\midrule
\multicolumn{4}{l}{Overall} & \textbf{65/72} \\
\bottomrule
\end{tabular}
\end{table*}

The 7 exceedances are concentrated in four model--dataset pairs: CIC-IDS-2018 $\times$ Mistral-7B (2 exceedances at $\alpha \in \{0.10, 0.20\}$), CIC-IDS-2018 $\times$ Qwen-3 14B (1 at $\alpha=0.05$), HIKARI-2021 $\times$ Gemma-2 9B (2 at $\alpha \in \{0.05, 0.20\}$), and RT-IoT2022 $\times$ Gemma-2 9B / Qwen-3 14B (2 combined).
In these cases the mean FAR is often below~$\alpha$ while at least one seed exceeds it, consistent with finite-sample variability under a stringent per-seed check.
We diagnose these exceedances directly in Section~\ref{sec:rq2.5} using the singleton-capacity framework of Section~\ref{sec:theory_contract}, which replaces an earlier nonconformity-score-variance heuristic that Proposition~\ref{prop:variance_counterexample} shows is not sufficient in general.

\subsection{RQ2: How Much Utility Does CRC Retain?}
\label{sec:rq2}

Table~\ref{tab:permodel} reports per-configuration results at $\alpha = 0.05$.

\begin{table*}[t]
\centering
\caption{Per-configuration FAR, Utility, and Coverage at $\alpha = 0.05$. Values are means $\pm$ std over 5 random seeds. Bold indicates Utility $\geq$ 0.90.}
\label{tab:permodel}
\begin{tabular}{llccc}
\toprule
Dataset & Model & FAR (mean $\pm$ std) & Utility & Coverage \\
\midrule
\multirow{6}{*}{CIC-IDS-2018}
& Gemma-2 9B & $0.035 \pm 0.002$ & 0.516 & 0.954 \\
& LLaMA-3 8B & $0.031 \pm 0.005$ & \textbf{0.961} & 0.961 \\
& Mistral 7B & $0.004 \pm 0.001$ & 0.378 & 0.996 \\
& Qwen-3 8B & $0.024 \pm 0.006$ & \textbf{0.952} & 0.952 \\
& Qwen-3 14B & $0.049 \pm 0.014$ & 0.838 & 0.950 \\
& Qwen-3 32B & $0.033 \pm 0.001$ & \textbf{0.954} & 0.954 \\
\midrule
\multirow{6}{*}{HIKARI-2021}
& Gemma-2 9B & $0.023 \pm 0.031$ & 0.831 & 0.977 \\
& LLaMA-3 8B & $0.000 \pm 0.000$ & \textbf{0.985} & 0.985 \\
& Mistral 7B & $0.000 \pm 0.000$ & \textbf{0.985} & 0.985 \\
& Qwen-3 8B & $0.000 \pm 0.000$ & \textbf{0.946} & 0.946 \\
& Qwen-3 14B & $0.000 \pm 0.000$ & \textbf{0.962} & 0.962 \\
& Qwen-3 32B & $0.000 \pm 0.000$ & \textbf{0.946} & 0.946 \\
\midrule
\multirow{6}{*}{RT-IoT2022}
& Gemma-2 9B & $0.036 \pm 0.004$ & 0.063 & 0.960 \\
& LLaMA-3 8B & $0.001 \pm 0.000$ & \textbf{0.959} & 0.959 \\
& Mistral 7B & $0.000 \pm 0.000$ & \textbf{0.954} & 0.954 \\
& Qwen-3 8B & $0.003 \pm 0.003$ & \textbf{0.959} & 0.959 \\
& Qwen-3 14B & $0.029 \pm 0.008$ & \textbf{0.904} & 0.951 \\
& Qwen-3 32B & $0.006 \pm 0.002$ & \textbf{0.927} & 0.952 \\
\bottomrule
\end{tabular}
\end{table*}

Across 18 configurations, the mean utility is 0.834, meaning that 83.4\% of attack alerts receive a correct automated technique attribution.
The mean automation rate, which also includes wrong singleton attributions, is 0.850.
Among the 12 configurations with utility above 0.90, FAR is simultaneously held below $\alpha = 0.05$, showing that high correct automation and low false-attribution risk are jointly achievable when the base classifier's score geometry favors singleton predictions.

The three low-utility outliers---CIC $\times$ Mistral (0.378), CIC $\times$ Gemma (0.516), and RT-IoT $\times$ Gemma (0.063)---correspond to models whose base classification accuracy or score separation is weak on those datasets.
CRC responds by increasing abstention, reducing the number of wrong automated attributions but also reducing usable automation.
This behavior is operationally conservative, but it is not a substitute for choosing a competent base model.

The best-performing configurations on HIKARI achieve 98.5\% utility with FAR $= 0$, so the calibrated layer abstains on only 1.5\% of alerts.
This suggests that the abstention layer has low overhead when the base model already separates the attack classes well.

\subsection{RQ2.5: Verifying the Decision-Contract Predictions}
\label{sec:rq2.5}

We test predictions P1--P7 of Section~\ref{sec:theory_predictions} against the existing per-sample logit data for all 18 LLM configurations, with no new inference.
For each configuration, $p_{(1)}(x)$ and $p_{(2)}(x)$ are recovered from the stored attack-normalized nonconformity scores, and $\hat\kappa$ is computed as the empirical CDF-gap supremum of Eq.~\eqref{eq:singleton_capacity}.
P1, P2, P4, and P7 check exact accounting identities (Propositions~\ref{prop:singleton_interval} and~\ref{prop:risk_factorization}, Theorems~\ref{thm:singleton_capacity}, \ref{thm:error_conservation}, and~\ref{thm:surrogate_slack}) that must hold by construction on any correctly computed data; we report them briefly, as an implementation check, before turning to the non-trivial empirical questions---P3, P5, and P6---that the identities alone do not answer: \emph{does} the capacity bound explain the observed low-utility configurations, \emph{how large} is semantic masking in practice, and \emph{does} a genuine structural-incapacity instance exist in this benchmark suite.

\paragraph{P1, P2, P4, P7 (accounting identities).}
All four hold with zero violations across the 18-configuration, 4-$\alpha$ matrix (difference $<10^{-9}$ throughout): the realized automation rate matches the CDF-gap prediction $\widehat F_2^-(\tau)-\widehat F_1^-(\tau)$ exactly (P1, Proposition~\ref{prop:singleton_interval}); $\mathrm{Utility}(\alpha)\leq\hat\kappa$ (P2, Theorem~\ref{thm:singleton_capacity}); $S(\tau)=C(\tau)-R(\tau)\geq0$ and is strictly positive wherever any deferral occurs, extending the single-configuration check of Table~\ref{tab:slack} (P4, Theorem~\ref{thm:surrogate_slack}); and, at the finest label resolution used throughout this paper where $M(C)\equiv0$, $B(h)=R(C)+D(C)$ (P7, the $\mathcal{E}=\mathcal{E}_{\mathrm{fine}}$ special case of Theorem~\ref{thm:error_conservation}, with the general $M(C)>0$ case reported as P5 below).

\paragraph{P5 (semantic masking under genuine coarsening).}
The 4-technique ATT\&CK mapping used elsewhere in this paper (Table~\ref{tab:mapping}) is bijective at both the technique and tactic level, so $M_\phi(g)\equiv 0$ in every experiment reported so far---the identity of Theorem~\ref{thm:coarsening_transfer} is verified only in its trivial boundary case up to this point.
To test the identity under a genuinely many-to-one map, we exploit real (non-synthetic) fine-grained attack subtypes present in the raw CIC-IDS-2018 labels before the dataset-specific loader collapses them into the four coarse categories: the DoS category aggregates 7 raw subtypes (e.g., DDoS-HOIC, DoS-Hulk, DoS-SlowHTTPTest), CredentialAccess aggregates 4 (e.g., FTP-BruteForce, SSH-Bruteforce), and Exploitation aggregates 3 (Bot, SQL Injection, Infilteration)---the coarse-to-fine map used by the benchmark curators themselves, not a construction of this paper.
We train an XGBoost classifier at this fine resolution (the sampled training data realizes 8 of the 14 raw attack subtypes, including a genuine 4-element DoS fiber: GoldenEye, Hulk, SlowHTTPTest, Slowloris), apply the singleton CRC rule to obtain a fixed selector $g$, and measure $R_{\mathrm{fine}}(g)$, $R_\phi(g)$, and $M_\phi(g)$ directly under the coarsening map $\phi$ back to the 4-category taxonomy, on 47{,}425 held-out attack alerts.
The identity $R_\phi(g) = R_{\mathrm{fine}}(g) - M_\phi(g)$ holds exactly at all 4 $\alpha$ levels (difference $<10^{-9}$ in every case), confirming Theorem~\ref{thm:coarsening_transfer} under a real, non-synthetic many-to-one map rather than only its trivial single-fiber boundary case.
The realized masking mass is small in this instance---$M_\phi(g)=2.1\times10^{-5}$ at $\alpha=0.05$ (the only $\alpha$ level with a non-zero fine-grained error among automated alerts) and $M_\phi(g)\approx 0$ elsewhere---because this classifier's rare fine-grained confusions occur predominantly \emph{across} coarse categories rather than within the same fiber.
This does not weaken the theorem: Theorem~\ref{thm:coarsening_transfer} is an exact accounting identity regardless of the masking magnitude, and Theorem~\ref{thm:reverse_impossibility} remains a worst-case existence result rather than a claim about typical deployments.
The practical reading is that coarsening-induced risk-hiding is a real, exactly quantifiable phenomenon whose magnitude is classifier- and taxonomy-dependent, and Theorem~\ref{thm:coarsening_transfer} gives the tool to measure it directly rather than assume it is negligible.

\paragraph{Finite-sample capacity certificate at readiness-sample size.}
Proposition~\ref{prop:capacity_certificate} is only useful if its DKW margin is tight enough, at a realistic unlabeled readiness-sample size, to distinguish structural incapacity from threshold misalignment before deployment---the question a pre-deployment practitioner actually faces, not whether the bound holds in the infinite-sample limit.
We test this directly: for each of the 18 configurations, we draw $m=200$ (matching the paper's calibration-set size, Section~\ref{sec:setup}) unlabeled subsamples from the full test population 500 times, compute $\hat\kappa$ on each subsample, and check whether it falls within the claimed margin $2\varepsilon_m(\delta)$ of the population $\kappa(f)$ at $\delta=0.1$.
Across all $18\times500=9{,}000$ trials, the margin holds in every case (pooled coverage $9{,}000/9{,}000$), against the $\geq90\%$ nominal target---the DKW bound is conservative here, as expected, but not so loose as to be uninformative: at $m=200$ the margin is $2\varepsilon_{200}(0.1)\approx0.192$, tight enough to certify structural incapacity ($\hat\kappa+2\varepsilon_m<\rho$) whenever the true $\kappa(f)$ is more than about $0.19$ below the action-rate floor $\rho$.
For the smallest test population in this study (HIKARI-2021, Table~\ref{tab:kappa}), the achievable margin is correspondingly wider, illustrating that the certificate's practical bite depends on the available readiness-sample size, not only on $\delta$.

\paragraph{P3 (structural incapacity, risk-constrained incapacity, or threshold misalignment).}
Table~\ref{tab:kappa} reports $\hat\kappa$, the risk-feasible capacity $\hat\kappa_\alpha$ (Definition~\ref{def:risk_feasible_capacity}, $\alpha=0.05$), and $A(\hat\tau)$, all computed on one fixed 200-sample calibration / held-out test split (seed $42$, matching Section~\ref{sec:setup}), with $\hat\kappa$ and $\hat\kappa_\alpha$ evaluated by exact enumeration of the singleton-interval breakpoints of Proposition~\ref{prop:singleton_interval} together with $\hat\tau$ itself, rather than a uniform grid: since $\hat\tau$ is always a feasible candidate in $\hat\kappa_\alpha$'s supremum when $R(\hat\tau)\leq\alpha$ on this split, this construction guarantees $\hat\kappa_\alpha\geq A(\hat\tau)$ by definition, with no discretization gap.
These single-split values differ slightly from Table~\ref{tab:permodel}'s 5-seed means (e.g.\ CIC $\times$ Gemma-2: $0.552$ here vs.\ $0.516$ there); we use one fixed split throughout this diagnostic so that $\hat\kappa$, $\hat\kappa_\alpha$, and $A(\hat\tau)$ are directly comparable to each other.
$\hat\kappa_\alpha$ itself is a post-hoc oracle diagnostic, not a deployable decision rule: it uses test labels to characterize \emph{whether} a better threshold exists at all, the same question Section~\ref{sec:rq3} correctly refuses to let B2-Threshold answer by tuning $\tau$ on held-out labels.  Concluding that a specific configuration is \emph{fixable in practice} additionally requires exhibiting a threshold selected without test labels; we do this separately for the one case where it matters, below.
Every configuration in the LLM matrix has $\hat\kappa\geq0.80$, ruling out structural incapacity at any deployment floor $\rho\leq0.80$, and in particular at the representative $\rho=0.5$ used throughout this section (P2); a stricter floor (e.g.\ $\rho=0.9$) would place RT-IoT $\times$ Gemma-2's $\hat\kappa=0.803$ into the structural-incapacity regime instead, since incapacity is a property of the pair $(\rho,f)$, not of $f$ alone.
But $\hat\kappa$ overstates what CRC could actually reach at $\alpha=0.05$: for CIC $\times$ Mistral, $\hat\kappa=1.000$ suggests substantial unused capacity, while $\hat\kappa_\alpha=0.380$ is almost exactly $A(\hat\tau)=0.380$---the calibrated threshold is already the best any threshold on this split can do without exceeding the risk target, and this is \emph{risk-constrained incapacity}, not a fixable misalignment.
RT-IoT $\times$ Gemma-2 shows the same pattern ($\hat\kappa=0.803$, $\hat\kappa_\alpha=0.122$, $A(\hat\tau)=0.111$).
CIC $\times$ Gemma-2 is different: the oracle $\hat\kappa_\alpha=0.637$ remains well above $A(\hat\tau)=0.552$, an $8.5$-point gap suggesting the shortfall may be recoverable.
We verify this without appeal to test labels: searching Proposition~\ref{prop:singleton_interval}'s singleton intervals for a threshold satisfying $R(\tau)\leq\alpha$ using only the 200-sample calibration split (the same split that produced $\hat\tau$) yields an alternative threshold with calibration-split action rate $0.585$; freezing this threshold and evaluating it on the disjoint held-out test split gives $A=0.605$ (vs.\ CRC's $0.552$ on the same split) at empirical risk $0.036\leq\alpha$---a genuine, label-free-at-selection-time improvement, confirming recoverable threshold misalignment for this configuration specifically rather than inferring it from the oracle $\hat\kappa_\alpha$ alone.
This distinction matters because $\hat\kappa$ and the actionability gap $\mathrm{AG}=\hat\kappa-A(\hat\tau)$, while still a valid label-free necessary condition and upper bound (Theorem~\ref{thm:singleton_capacity}), are not sufficient on their own to diagnose which configurations are recoverable: two of the three cases with the largest $\mathrm{AG}$ turn out not to be, and only a labeled check of $\hat\kappa_\alpha$ against $A(\hat\tau)$ resolves it.
This is still a strictly stronger and falsifiable replacement for the retracted NC-variance heuristic (Proposition~\ref{prop:variance_counterexample}), which offered no such distinction at all; Section~\ref{sec:exp-degenerate} additionally tests whether genuine structural incapacity ($\hat\kappa$ near zero) occurs anywhere in the ML+CRC baselines.

\begin{table*}[t]
\centering
\caption{Singleton capacity $\hat\kappa$, risk-feasible capacity $\hat\kappa_\alpha$, and realized action rate at $\alpha=0.05$ for the three lowest-utility LLM configurations, plus the range across the remaining 15, all on one fixed calibration/test split (seed 42) with $\hat\kappa,\hat\kappa_\alpha$ computed by exact breakpoint enumeration so that $\hat\kappa_\alpha\geq A(\hat\tau)$ holds by construction. $\hat\kappa$ alone would misclassify two of the three as recoverable threshold misalignment; $\hat\kappa_\alpha$ shows only CIC~$\times$~Gemma-2 actually is.}
\label{tab:kappa}
\begin{tabular}{llcccl}
\toprule
Dataset & Model & $\hat\kappa$ & $\hat\kappa_\alpha$ & $A(\hat\tau)$ & Regime \\
\midrule
RT-IoT2022 & Gemma-2 9B & 0.803 & 0.122 & 0.111 & Risk-constrained incapacity \\
CIC-IDS-2018 & Mistral 7B & 1.000 & 0.380 & 0.380 & Risk-constrained incapacity \\
CIC-IDS-2018 & Gemma-2 9B & 0.915 & 0.637 & 0.552 & Threshold misalignment \\
\multicolumn{6}{l}{\emph{(remaining 15 configurations: $\hat\kappa_\alpha \geq 0.923$, within $0.08$ of $A(\hat\tau)$ in all)}} \\
\bottomrule
\end{tabular}
\end{table*}

\phantomsection\label{sec:exp-certificate}
\paragraph{P6 (finite-sample actionability certificate).}
We apply Proposition~\ref{prop:action_rate_certificate} rather than the raw empirical action rate: $\hat\tau$ is fixed on a 200-sample calibration split, and $\underline A_\delta$ is computed at $\delta=0.1$ from the disjoint certification split (the remaining test samples; $m=1{,}288$ for CIC-IDS-2018, $1{,}579$ for RT-IoT2022, and $26$ for HIKARI-2021, whose small test set gives a correspondingly wide margin $\varepsilon_m\approx0.21$).
Certifying $(\alpha,\rho)$-actionability at $\alpha=0.05,\rho=0.5$ against $\underline A_\delta$ (not the point estimate) still passes 16 of 18 configurations, and the two that fail---CIC $\times$ Mistral ($\underline A_\delta=0.350$) and RT-IoT $\times$ Gemma-2 ($\underline A_\delta=0.084$)---are exactly the two Table~\ref{tab:kappa} configurations whose risk-feasible capacity $\hat\kappa_\alpha$ is itself below $\rho=0.5$ (Table~\ref{tab:kappa}'s risk-constrained-incapacity cases): no calibration choice at $\alpha=0.05$ could have certified these regardless of sample size, which is a stronger and more specific statement than a failed empirical check.
For the 16 certified configurations, Theorem~\ref{thm:actionability_bridge} with $\underline A_\delta$ in place of $\rho$ gives $U \geq \underline A_\delta - 0.05$ and $Q \leq 0.05/\underline A_\delta$ on the $0.9$-probability event that the action-rate bound holds, combined with the risk side's standard marginal conformal guarantee (Theorem~\ref{thm:crc})---a stated confidence level on the action-rate side, rather than an unqualified point estimate, though not a single joint confidence level across both sides (Proposition~\ref{prop:action_rate_certificate}).

\subsection{RQ3: CRC vs.\ Baselines}
\label{sec:rq3}

\paragraph{CRC vs.\ Argmax across all configurations.}
B1-Argmax (no abstention) uses the LLM's top-1 prediction directly, so FAR equals the base misclassification risk regardless of $\alpha$.
Across all 18 configurations at $\alpha = 0.05$, B1-Argmax violates the FAR target in 6 of 18 cases (where the base error rate exceeds 0.05), while B3-CRC satisfies it in 16 of 18.
In the 12 configurations where B1-Argmax happens to satisfy $\alpha = 0.05$ (because the model is already accurate enough), CRC achieves comparable utility (mean 0.94 vs.\ B1's 0.97) while adapting the decision rule to the target risk level.

\paragraph{Detailed three-way comparison.}
Table~\ref{tab:baselines} compares all three baselines on CIC-IDS-2018 $\times$ Qwen-3 8B, where the base error rate (0.033) makes the FAR--utility tradeoff non-trivial.
B2-Threshold here acts on the predicted-class confidence $p_{(1)}(x)$ only (Definition of $g_\tau$, Section~\ref{sec:singleton_geometry}): the threshold $\tau$ is swept on the calibration split to find the smallest $\tau$ (highest action rate) whose \emph{calibration-set} FAR satisfies the target, then frozen and applied to the test split.  An earlier version of this baseline selected $\tau$ using the true test label directly, which is not a deployable decision rule; we report the corrected, label-free version throughout.

\begin{table}[t]
\centering
\caption{Baseline comparison on CIC-IDS-2018 $\times$ Qwen-3 8B. Utility is the correct automated attribution rate. B2-Threshold's $\tau$ is calibrated on the cal split only and frozen before evaluation on test.}
\label{tab:baselines}
\begin{tabular}{llccc}
\toprule
$\alpha$ & Method & FAR & Utility & Coverage \\
\midrule
\multirow{3}{*}{0.01}
& B1-Argmax & 0.033 & 0.967 & --- \\
& B2-Threshold & 0.001 & 0.858 & --- \\
& B3-CRC & 0.001 & 0.858 & 0.999 \\
\midrule
\multirow{3}{*}{0.05}
& B1-Argmax & 0.033 & 0.967 & --- \\
& B2-Threshold & 0.033 & 1.000 & --- \\
& B3-CRC & 0.030 & 0.959 & 0.959 \\
\midrule
\multirow{3}{*}{0.10}
& B1-Argmax & 0.033 & 0.967 & --- \\
& B2-Threshold & 0.033 & 1.000 & --- \\
& B3-CRC & 0.016 & 0.933 & 0.933 \\
\midrule
\multirow{3}{*}{0.20}
& B1-Argmax & 0.033 & 0.967 & --- \\
& B2-Threshold & 0.033 & 1.000 & --- \\
& B3-CRC & 0.000 & 0.780 & 0.780 \\
\bottomrule
\end{tabular}
\end{table}

At $\alpha=0.01$, B2 and B3 are statistically indistinguishable ($\tau=0.839$ vs.\ $1-\hat q=0.836$): both restrict action to a comparably high-confidence region and reach nearly identical FAR and utility.
At $\alpha \in \{0.05, 0.10, 0.20\}$, the cal-split FAR of the unrestricted argmax rule (0.033) already satisfies the target, so the calibrated $\tau$ collapses to $0$ and B2 degenerates to B1-Argmax exactly---both act on every alert, and both happen to realize test-set FAR below $\alpha$ in this particular split.
On this single configuration, CRC does not dominate a correctly calibrated global threshold in raw utility; the earlier reported 25.7\% utility gap at $\alpha=0.05$ was an artifact of a baseline that selected $\tau$ using held-out true labels and is not reproduced here.
The distinction that survives is the one identified in Section~\ref{sec:coarsening_theory} and Section~\ref{sec:singleton_geometry}: CRC's threshold $\hat q$ is the calibration order statistic of Theorem~\ref{thm:crc} and inherits its finite-sample, distribution-free coverage guarantee under exchangeability, whereas B2's $\tau$ is selected by unconstrained grid search against the calibration-set empirical FAR and carries no comparable guarantee---it can overfit a small or unlucky calibration split, a risk that does not appear in this single low-noise configuration but is not bounded by any theorem for B2.
We quantify the operational cost of using CRC's conformal miscoverage as a surrogate for wrong-automation risk next.

\phantomsection\label{sec:exp-slack}
\paragraph{Surrogate slack.}
Theorem~\ref{thm:surrogate_slack} decomposes conformal miscoverage $C(\tau)=\mathbb{P}(y\notin\Gamma_\tau(x))$ into the operational risk $R(\tau)$ actually incurred by automated decisions and the deferred-miscoverage slack $S(\tau)=C(\tau)-R(\tau)$ spent on alerts that are abstained anyway.
Table~\ref{tab:slack} reports this decomposition for B3-CRC on CIC-IDS-2018 $\times$ Qwen-3 8B, using $C(\tau)=1-\text{coverage}$ from Table~\ref{tab:baselines}.

\begin{table}[t]
\centering
\caption{Surrogate-slack decomposition (Theorem~\ref{thm:surrogate_slack}) for B3-CRC on CIC-IDS-2018 $\times$ Qwen-3 8B. $S(\tau)=C(\tau)-R(\tau)\geq 0$ in every row, confirming the identity; $S$ grows with $\alpha$ as more of the miscoverage budget is spent on already-deferred alerts.}
\label{tab:slack}
\begin{tabular}{rcccc}
\toprule
$\alpha$ & $C(\tau)$ & $R(\tau)$ & $S(\tau)$ & AR \\
\midrule
0.01 & 0.0008 & 0.0008 & 0.0000 & 0.141 \\
0.05 & 0.0411 & 0.0295 & 0.0116 & 0.012 \\
0.10 & 0.0668 & 0.0163 & 0.0505 & 0.050 \\
0.20 & 0.2197 & 0.0000 & 0.2197 & 0.220 \\
\bottomrule
\end{tabular}
\end{table}

At $\alpha=0.20$, the identity is most visible: all of the nominal $0.22$ miscoverage budget is spent on samples that are already abstained ($S=0.2197$, $R=0$), so the coverage-calibrated threshold is fully conservative relative to the operational objective at this operating point---conformal miscoverage control is valid here but tells us nothing about wrong-automation risk, which is exactly zero.
At $\alpha=0.05$, roughly $28\%$ of the miscoverage budget ($0.0116/0.0411$) is similarly non-operational.
This slack is the quantitative reason a practitioner should report $R(\tau)$ (or FAR) directly rather than treating conformal coverage as self-evidently the quantity of interest.

\paragraph{ML+CRC baseline (B4).}
To isolate the contribution of LLM backbones from the CRC framework itself, we apply the same CRC abstention layer to XGBoost and LightGBM classifiers trained on the same attack-class features.
Table~\ref{tab:ml_crc} reports the results at $\alpha = 0.05$.

\begin{table*}[t]
\centering
\caption{ML+CRC baseline (B4) vs.\ best LLM+CRC (B3) at $\alpha = 0.05$, after the training-stability correction of Section~\ref{sec:exp-degenerate}. Utility is the correct automated attribution rate.}
\label{tab:ml_crc}
\begin{tabular}{llccc}
\toprule
Dataset & Model & FAR (mean $\pm$ std) & Utility & Guarantee \\
\midrule
\multirow{3}{*}{CIC-IDS-2018}
& XGBoost+CRC & $0.000 \pm 0.000$ & 0.961 & \checkmark \\
& LightGBM+CRC & $0.000 \pm 0.000$ & 0.959 & \checkmark \\
& Best LLM+CRC & $0.031 \pm 0.005$ & \textbf{0.961} & \checkmark \\
\midrule
\multirow{3}{*}{HIKARI-2021}
& XGBoost+CRC & $0.000 \pm 0.000$ & 0.956 & \checkmark \\
& LightGBM+CRC & $0.000 \pm 0.000$ & 0.960 & \checkmark \\
& Best LLM+CRC & $0.000 \pm 0.000$ & \textbf{0.985} & \checkmark \\
\midrule
\multirow{3}{*}{RT-IoT2022}
& XGBoost+CRC & $0.000 \pm 0.000$ & 0.952 & \checkmark \\
& LightGBM+CRC & $0.0001 \pm 0.0000$ & 0.953 & \checkmark \\
& Best LLM+CRC & $0.001 \pm 0.000$ & \textbf{0.959} & \checkmark \\
\bottomrule
\end{tabular}
\end{table*}

On all three datasets, ML+CRC is competitive with the best LLM+CRC (utility within 1--3 percentage points), indicating that CRC's value is largely orthogonal to backbone choice when the base classifier's probability estimates are usable.
The largest gap is on HIKARI-2021, where the best LLM (LLaMA-3) retains a 2.5-point advantage over LightGBM (0.985 vs.\ 0.960).

This result clarifies the contribution of the backbone: the CRC framework is model-agnostic and delivers comparable utility with ML classifiers when their probability estimates are useful.
The observed advantage of the LLM backbones is consistency across these three datasets, not uniform dominance over every ML classifier.

\phantomsection\label{sec:exp-degenerate}
\paragraph{A retracted structural-incapacity claim, and what replaced it.}
An earlier version of this analysis reported that LightGBM+CRC on RT-IoT2022 abstains on 100\% of samples ($\text{FAR}=0$, $\text{Utility}=0$) at $\alpha \leq 0.10$, and interpreted this as a structural-incapacity instance under Theorem~\ref{thm:singleton_capacity}---the regime $\hat\kappa<\rho$, distinct from both risk-constrained incapacity and threshold misalignment in Table~\ref{tab:kappa}, neither of which requires $\hat\kappa$ itself to be low.
This did not survive a training-stability check.
Retraining LightGBM on the identical data, feature pipeline, and calibration procedure with 20 different random seeds ($\{0,\ldots,19\}$, holding the CRC calibration seeds fixed) produced a degenerate ($\text{Utility} < 0.1$) model in 0 of 20 runs; every retrained model achieves $\hat\kappa = 1.000$ and $\text{Utility} \approx 0.948$--$0.950$, matching XGBoost.
The original collapse is therefore best explained as a one-off training artifact---plausibly floating-point nondeterminism in multi-threaded histogram construction producing, on that single run, a true-label probability distribution with a rare but non-negligible high-nonconformity cluster large enough to saturate the conformal calibration quantile at $\hat q=1$---rather than a reproducible property of LightGBM or of this dataset.
We report this as a negative result: \emph{no structural-incapacity instance occurs anywhere in the 24 LLM+CRC and ML+CRC configurations evaluated in this paper.}
Theorem~\ref{thm:singleton_capacity} and Proposition~\ref{prop:capacity_certificate} remain applicable as a pre-deployment diagnostic for the regime, and the retracted case is retained here as a caution against drawing structural conclusions from a single training run of a stochastic learner, a risk the label-free $\hat\kappa$ certificate does not remove by itself---it must be paired with a stability check across retrainings when the base learner's fit is not itself guaranteed to be stable.

\paragraph{Pareto frontier analysis.}
Figure~\ref{fig:pareto} traces the FAR--Utility tradeoff for three representative models (Qwen-3 8B, LLaMA-3 8B, Qwen-3 32B) on CIC-IDS-2018 by sweeping $\alpha \in [0.001, 0.20]$.
All three models exhibit a sharp elbow between $\alpha = 0.01$ and $\alpha = 0.05$: utility jumps from $\sim$0.62--0.86 to $\sim$0.96, after which further relaxation of $\alpha$ yields diminishing returns.
LLaMA-3 8B achieves the earliest saturation (utility $> 0.96$ at $\alpha = 0.05$), while Qwen-3 32B exhibits a delayed elbow due to its bimodal NC score distribution (nc\_var $= 0.032$).

\begin{figure}[t]
\centering
\includegraphics[width=\columnwidth]{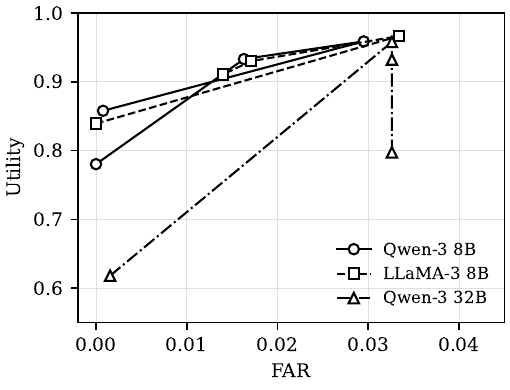}
\caption{FAR--Utility tradeoff on CIC-IDS-2018 for three representative LLMs, swept over $\alpha \in [0.001, 0.20]$. Markers trace the empirical operating points; the elbow near $\alpha=0.01$--$0.05$ marks the practical transition from strict risk control to the main operating point.}
\label{fig:pareto}
\end{figure}

\subsection{RQ4: ATT\&CK Hierarchy Consistency}
\label{sec:rq4}

Because $\phi$ is bijective, technique-to-tactic propagation of the CRC guarantee is a mathematical consequence of the label structure (Section~\ref{sec:formulation}), not an empirical finding; as a sanity check, FAR, utility, and coverage are numerically identical at both levels across all seeds and $\alpha$ (e.g., seed 42, $\alpha=0.01$: FAR$=0.00078$, utility$=0.858$, coverage$=0.999$ at both).
Under a genuinely many-to-one mapping the guarantee would still propagate but would no longer be tight, since some technique-level errors become correct at the tactic level---exactly the coarsening identity of Theorem~\ref{thm:coarsening_transfer}, verified directly under a real many-to-one map in Section~\ref{sec:rq2.5}.

\subsection{RQ5: Cross-Dataset Robustness}
\label{sec:rq5}

Table~\ref{tab:cross} tests whether a CRC threshold calibrated on one dataset transfers to another, for Qwen-3 8B and LLaMA-3 8B at $\alpha = 0.05$.

\begin{table}[t]
\centering
\caption{Cross-dataset generalization at $\alpha = 0.05$. ``Cal'' = calibration source, ``Test'' = test target. FAR remains below $\alpha$ in these 8 configurations, but utility varies widely.}
\label{tab:cross}
\begin{tabular}{llllccc}
\toprule
Model & Cal & Test & FAR & Utility & Guarantee \\
\midrule
\multirow{4}{*}{Qwen-3 8B}
& CIC & HIKARI & 0.000 & 1.000 & \checkmark \\
& CIC & RT-IoT & 0.011 & 0.962 & \checkmark \\
& HIKARI & CIC & 0.000 & 0.043 & \checkmark \\
& RT-IoT & CIC & 0.000 & 0.836 & \checkmark \\
\midrule
\multirow{4}{*}{LLaMA-3 8B}
& CIC & HIKARI & 0.000 & 1.000 & \checkmark \\
& CIC & RT-IoT & 0.003 & 0.992 & \checkmark \\
& HIKARI & CIC & 0.000 & 0.611 & \checkmark \\
& RT-IoT & CIC & 0.000 & 0.580 & \checkmark \\
\bottomrule
\end{tabular}
\end{table}

Empirical FAR remains below $\alpha$ in all 8 configurations despite exchangeability not being guaranteed across datasets---a robustness observation, not a formal transfer guarantee---but utility degrades substantially in several cases, down to 4.3\% (95.7\% deferred) for HIKARI$\to$CIC with Qwen-3.
The asymmetry is informative: CIC$\to$HIKARI transfers well (utility 1.0) because CIC's richer label space produces a threshold generous enough for HIKARI's easier 2-class problem, while HIKARI$\to$CIC fails because HIKARI's 2-class calibration is overly conservative for CIC's 3-class task.
In-domain calibration is therefore essential for deployment; adaptive or non-exchangeable conformal methods~\cite{gibbs2021adaptive,farinhas2024nonexchangeable} are natural extensions.

\subsection{RQ6: Calibration Set Size Ablation}
\label{sec:ablation}

Table~\ref{tab:ncal} shows the effect of calibration set size on FAR and utility (CIC-IDS-2018 $\times$ Qwen-3 8B, $\alpha = 0.05$).

\begin{table}[t]
\centering
\caption{Effect of calibration set size $n_{\text{cal}}$ on FAR and Utility (CIC-IDS-2018, Qwen-3 8B, $\alpha = 0.05$). Utility saturates at $n_{\text{cal}} \approx 200$; intermediate values (400, 600) fall within $0.01$ of the 800 endpoint.}
\label{tab:ncal}
\begin{tabular}{rccc}
\toprule
$n_{\text{cal}}$ & FAR & AR & Utility \\
\midrule
50 & 0.008 & 0.097 & 0.895 \\
100 & 0.024 & 0.022 & 0.954 \\
200 & 0.024 & 0.024 & 0.952 \\
800 & 0.024 & 0.017 & 0.959 \\
\bottomrule
\end{tabular}
\end{table}

Utility rises sharply from $n_{\text{cal}}=50$ (0.895) to $100$ (0.954), then saturates within $0.01$ through $800$, while FAR stays well below $\alpha=0.05$ throughout; at $n_{\text{cal}}=50$ the threshold is overly conservative (FAR$=0.008\ll0.05$), a finite-sample artifact where an imprecise $\hat q$ makes CRC default to caution.
Consequently, 200 labeled attack samples suffice for near-optimal CRC deployment---a modest requirement for most SOC environments.

\section{Discussion}
\label{sec:discussion}

\subsection{Failure Case Analysis}
\label{sec:failure}

An earlier version of this analysis diagnosed low-utility configurations using nonconformity-score variance combined with base accuracy.
Proposition~\ref{prop:variance_counterexample} shows this heuristic is not sound in general: two classifiers can share the same true-label score mean, variance, and top-one accuracy while having opposite singleton capacity ($\kappa=1$ vs.\ $\kappa=0$).
We replace it with the label-free diagnostic developed in Section~\ref{sec:singleton_geometry} and verified in Section~\ref{sec:rq2.5}.

Every low-utility configuration in the LLM matrix (Table~\ref{tab:kappa}) has high singleton capacity ($\hat\kappa\geq0.80$): RT-IoT $\times$ Gemma-2 ($\hat\kappa=0.803$), CIC $\times$ Mistral ($\hat\kappa=1.000$), and CIC $\times$ Gemma-2 ($\hat\kappa=0.915$).
By Theorem~\ref{thm:singleton_capacity}, this rules out structural incapacity for all three at $\rho=0.5$: the base classifier's top-1 and top-2 attack probabilities are separable enough, on most inputs, to support a much higher \emph{unconstrained} automation rate than CRC's calibrated $\hat\tau$ achieves.
It does not follow that this unused capacity is reachable under the risk target, and here the three cases diverge once $\hat\kappa_\alpha$ (Definition~\ref{def:risk_feasible_capacity}) is checked directly against the true labels.
For CIC $\times$ Mistral and RT-IoT $\times$ Gemma-2, $\hat\kappa_\alpha$ (0.380 and 0.122) sits almost exactly at the realized $A(\hat\tau)$ (0.380 and 0.111): CRC's calibrated threshold is already the best any threshold can do without exceeding $\alpha=0.05$ on this split, and the large unconstrained $\hat\kappa$ was, in these two cases, not informative about what recalibration could achieve---these are \emph{risk-constrained incapacity}, and the correct remedy is a less stringent $\alpha$ or a more competent base classifier, not a different threshold at the same $\alpha$.
CIC $\times$ Gemma-2 is the one case where the oracle $\hat\kappa_\alpha=0.637$ remains well above $A(\hat\tau)=0.552$, and here the recoverability is confirmed without test-label selection (Section~\ref{sec:rq2.5}): a threshold search restricted to the calibration split alone finds an alternative $\hat\tau$ that reaches $A=0.605$ at empirical risk $0.036\leq\alpha$ on the disjoint held-out test split, genuinely recoverable threshold misalignment for this configuration specifically.

The complementary failure mode---genuine structural incapacity, where $\hat\kappa$ itself is near zero and no threshold can raise the action rate at any risk level---is not observed anywhere in this study; Section~\ref{sec:exp-degenerate} reports an ML+CRC configuration that initially appeared to be such a case but did not survive a 20-seed training-stability check, and we retract that reading rather than retrofit a diagnosis to an artifact.
The practical pre-deployment check that follows from Theorem~\ref{thm:singleton_capacity} and Proposition~\ref{prop:capacity_certificate} is therefore staged, not a single number: estimate $\hat\kappa$ on an unlabeled readiness sample before calibration, and re-estimate it across a small number of retrainings when the base learner's fit is not itself guaranteed to be stable.
If $\hat\kappa$ is already below the required action-rate floor $\rho$, no amount of recalibration, larger $n_{\mathrm{cal}}$, or $\alpha$ relaxation will make the configuration deployable, and the correct remedy is to change the base classifier.
If $\hat\kappa\geq\rho$, this only rules out the geometric obstruction; a shortfall at $A(\hat\tau)<\rho$ still requires checking $\hat\kappa_\alpha$ (which needs labels) before concluding that recalibration---rather than a different $\alpha$ or a different classifier---is the right remedy.

\subsection{Practical Implications}

Section~\ref{sec:rq1}'s FAR check alone is not a deployment criterion: Proposition~\ref{prop:risk_factorization} and the retracted case in Section~\ref{sec:exp-degenerate} show that a low FAR is equally consistent with a well-functioning tiered workflow and with a system that automates almost nothing.
The framework instead supports a four-stage deployment procedure, each stage gating the next:

\begin{enumerate}[leftmargin=*]
\item \textbf{Capacity check}: before calibration, estimate $\hat\kappa$ on an unlabeled readiness sample (Proposition~\ref{prop:capacity_certificate}).
If $\hat\kappa$ falls below the required action-rate floor $\rho$, no calibration choice can reach $\rho$ (Theorem~\ref{thm:singleton_capacity}); replace the base classifier rather than proceeding.

\item \textbf{Calibration}: fix $\hat\tau$ on a labeled calibration split via Algorithm~\ref{alg:crc}, certifying $R(C_{\hat\tau})\leq\alpha$ (Theorem~\ref{thm:crc}).

\item \textbf{Actionability certification}: on a disjoint certification split, compute the action-rate lower bound $\underline A_\delta$ (Proposition~\ref{prop:action_rate_certificate}).
Only if $\underline A_\delta\geq\rho$ does the contract satisfy $(\alpha,\rho)$-actionability with confidence $1-\delta$ (Theorem~\ref{thm:actionability_bridge}); if it does not, check $\hat\kappa_\alpha$ against the same certification data (Definition~\ref{def:risk_feasible_capacity}) before choosing a remedy: $\hat\kappa_\alpha\geq\rho$ means a different threshold at the same $\alpha$ can recover the shortfall (recalibration), while $\hat\kappa_\alpha<\rho$ means the current $\alpha$ itself is the binding constraint, and only relaxing $\alpha$ or replacing the base classifier will help (Section~\ref{sec:failure}).

\item \textbf{Deployment}: alerts with $g_{\hat\tau}(x)=1$ are logged with the singleton ATT\&CK technique and routed to automated playbooks; alerts with $g_{\hat\tau}(x)=0$ are forwarded to analysts with the top-$k$ candidate techniques and their scores as decision support.
\end{enumerate}

Skipping stages 1 and 3 and deploying on the FAR check alone is exactly the failure mode this paper opened with: RT-IoT2022 $\times$ Gemma-2 (Table~\ref{tab:kappa}) passes stage 2 with $\text{FAR}=0.036\pm0.004\leq\alpha$ but has $\underline A_\delta=0.084$, and would clear a FAR-only gate while automating under 10\% of alerts.
In the configurations that pass all four stages, the workflow handles roughly 90--95\% of attack alerts correctly and automatically with the certified error bounds of Theorem~\ref{thm:actionability_bridge}.

\subsection{NC Score Design}

The ablation in E7 compares normalized NC scores (Eq.~\eqref{eq:nc_score}) against raw softmax probabilities.
At $\alpha = 0.05$, the normalized design achieves FAR $= 0.030$ with utility $= 0.959$, while the raw design yields FAR $= 0.032$ with utility $= 0.976$ but higher FAR variance across seeds.
Normalization produces tighter FAR control at a modest utility cost, which is the preferred tradeoff for safety-critical applications.

\subsection{Limitations}
\label{sec:limitations}

\paragraph{Label granularity and score ranking.}
Our LLM+CRC evaluation uses 3--4 ATT\&CK techniques per dataset, whereas the full ATT\&CK matrix contains over 600 techniques.
The CRC framework itself is agnostic to the number of classes, but $\kappa(f)$ (Theorem~\ref{thm:singleton_capacity}) may degrade with finer-grained technique taxonomies where LLM confusion between similar techniques increases; as a supplementary check on the current 4-technique mapping, the Spearman rank correlation between $f_k(x)$ and the ground-truth indicator averages $\rho=0.672$ (std $0.210$, minimum $-0.188$) across 48 model--dataset--class configurations, with $43.75\%$ of configurations exceeding $\rho=0.8$---consistent with $\kappa(f)\geq0.80$ holding everywhere in Table~\ref{tab:kappa} but not a guarantee that it will continue to at finer granularities.
Section~\ref{sec:coarsening_theory} shows that real fine-grained subtype labels exist in at least one benchmark (CIC-IDS-2018's raw attack labels, pre-dating the dataset's own 4-category collapse) and uses them to test the coarsening-transfer identity; extending the full LLM+CRC pipeline to this or a finer ATT\&CK sub-technique resolution, rather than the single-classifier XGBoost instantiation used there, is left for future work.

\paragraph{Input modality.}
Our system operates on structured network flow features, not unstructured text.
It is not comparable to CTI text-to-ATT\&CK systems (rcATT, TRAM, LADDER), which solve a fundamentally different task.
The two approaches are complementary: ours handles real-time IDS alerts, while CTI mappers process post-incident reports.

\paragraph{Model dependence.}
CRC utility is bounded by the base model's classification accuracy and score separation.
As shown in Table~\ref{tab:permodel}, poorly performing models (Gemma-2 on RT-IoT) result in near-total abstention.
A single training run of the ML+CRC baseline initially appeared to reinforce this with a complete-abstention case (LightGBM on RT-IoT, utility $=0$); Section~\ref{sec:exp-degenerate} shows this does not reproduce across 20 retrainings and should not be read as a stable property of either the classifier or the framework.
CRC does not improve model accuracy; it provides a principled mechanism to expose and manage model uncertainty.
In these experiments, LLM backbones offer more stable probability rankings across the three traffic distributions, even though ML classifiers are competitive on individual datasets.

\paragraph{Conditional error among automated outputs.}
The controlled FAR is an unconditional risk over all attack alerts.
It should not be read as the conditional error rate among the subset of alerts that receive automated attributions.
Remote recomputation from the raw logprob files shows that, at $\alpha=0.05$, the mean conditional automated-attribution error is 3.6\%, but RT-IoT2022 $\times$ Gemma-2 reaches 36.3\% because the model automates only 9.9\% of alerts.
This is why conditional error should be monitored as a deployment diagnostic even though the formal risk target is unconditional.

\paragraph{Exchangeability assumption.}
The conformal guarantee requires calibration and test samples to be exchangeable.
In practice, network traffic is non-stationary (new attack types emerge, traffic patterns shift).
Our cross-dataset experiments (Section~\ref{sec:rq5}) show low empirical FAR under the tested shifts, but they do not provide a formal guarantee once exchangeability is broken.
Utility degrades substantially under several shifts.
Adaptive conformal methods~\cite{gibbs2021adaptive,zaffran2022adaptive} that reweight or refresh calibration data over time are a natural extension for production SOC deployments, as are non-exchangeable risk-control formulations~\cite{farinhas2024nonexchangeable} that relax the exchangeability requirement directly rather than tracking a drifting threshold; either would extend Theorem~\ref{thm:crc}'s guarantee without changing the decision-contract accounting built on top of it.

\section{Conclusion}
\label{sec:conclusion}

This paper began from an operational puzzle---a calibrated conformal risk bound satisfied by a classifier that automates nothing---and generalized it: a risk certificate is a property of a decision contract, not of a predictor alone, and can be satisfied by weakening that contract through abstention or semantic coarsening rather than genuine predictive competence.
We formalized this as an error-conservation law ($B(h)=R(C)+D(C)+M(C)$, Theorem~\ref{thm:error_conservation}), an exact geometric characterization of singleton automation and the resulting capacity limit $\kappa(f)$ (Theorem~\ref{thm:singleton_capacity}), and a non-degenerate $(\alpha,\rho)$-actionable certificate whose action-rate side carries an explicit finite-sample lower confidence bound (Proposition~\ref{prop:action_rate_certificate}), distinct in kind from the risk side's standard conformal guarantee rather than resting on a single realized measurement.

Instantiated on ATT\&CK-aligned alert triage for LLM-based intrusion detection (3 datasets, 6 LLMs, 4 error-rate thresholds), the theory separates three regimes that a scalar risk number cannot distinguish.
The singleton-capacity diagnostic rules out structural incapacity for all three low-utility LLM configurations at deployment floor $\rho=0.5$, correcting what a variance-based heuristic would have wrongly permitted (Proposition~\ref{prop:variance_counterexample}); but $\kappa(f)$ alone is not sufficient to call the shortfall recoverable---the risk-feasible capacity $\kappa_\alpha(f)$ shows two of the three are risk-constrained incapacity, not threshold misalignment, and only one is genuinely fixable, verified by exhibiting the alternative threshold rather than inferred from $\kappa(f)$'s unconstrained bound.
A 20-seed training-stability re-run of the ML+CRC baseline shows an initially observed zero-utility collapse (LightGBM on RT-IoT2022) does not reproduce, so this study finds no confirmed instance of the third regime, genuine structural incapacity; we report this as an open question rather than overstate a single-run anomaly.
Using real fine-grained attack-subtype labels from CIC-IDS-2018's own data pipeline, we confirmed the coarsening-transfer identity ($R_\phi=R_{\mathrm{fine}}-M_\phi$, Theorem~\ref{thm:coarsening_transfer}) exactly under a genuine many-to-one map, with small but non-zero masking mass, showing the identity's practical bite is taxonomy- and model-dependent rather than universal.
Calibration efficiency analysis shows 200 labeled attack samples suffice for near-optimal performance.

The practical implication is that a security automation pipeline should certify capacity, risk, and action rate jointly, not risk alone.
Future work includes searching for a confirmed structural-incapacity instance, extending the coarsening analysis to the full LLM matrix, handling distribution shift via adaptive or non-exchangeable conformal methods, and integrating actionability certification into multi-step incident response pipelines with trajectory-level risk guarantees.

\bibliographystyle{IEEEtran}
\bibliography{refs}

\end{document}